\newtheorem{theorem}{Theorem}
\newtheorem{lemma}{Lemma}
\newtheorem{definition}{Definition}
\newtheorem{proof}{Proof}
\newtheorem{proposition}{Proposition}
\newtheorem{remark}{Remark}
\begin{document}
%
\title{Polar-Slotted ALOHA over Slot Erasure Channels}

\author{
        Zhijun~Zhang,~\IEEEmembership{Student Member,~IEEE,}
        Kai~Niu,~\IEEEmembership{Member,~IEEE,}
        Jincheng Dai,~\IEEEmembership{Member,~IEEE,}
        and Chao~Dong,~\IEEEmembership{Member,~IEEE}~
\thanks{This work is supported by the National Key R$\&$D Program of China (2018YFE0205501)
and the National Natural Science Foundation of China (Grant No.61671080). (Corresponding author: Kai Niu.)}

\thanks{The authors are with the Key Laboratory of Universal Wireless Communications,
Ministry of Education, Beijing University of Posts and Telecommunications,
Beijing 100876, China. e-mail: zjzhang@bupt.edu.cn; niukai@bupt.edu.cn; daijincheng@bupt.edu.cn and dongchao@bupt.edu.cn
}
}


\maketitle

\begin{abstract}
In this paper, we design a new polar slotted ALOHA (PSA) protocol over the slot erasure channels, which uses polar coding to construct the identical slot pattern (SP) assembles within each active user and base station. A theoretical analysis framework for the PSA is provided. First, by using the packet-oriented operation for the overlap packets when they conflict in a slot interval, we introduce the packet-based polarization transform and prove that this transform is independent of the packet's length. Second, guided by the packet-based polarization, an SP assignment (SPA) method with the variable slot erasure probability (SEP) and a SPA method with a fixed SEP value are designed for the PSA scheme. Then, a packet-oriented successive cancellation (pSC) and a pSC list (pSCL) decoding algorithm are developed. Simultaneously, the finite-slots throughput bounds and the asymptotic throughput for the pSC algorithm are analyzed. The simulation results show that the proposed PSA scheme can achieve an improved throughput with the pSC/SCL decoding algorithm over the traditional repetition slotted ALOHA scheme.
\end{abstract}

\begin{IEEEkeywords}
 Slotted ALOHA, polar code, slot erasure channel, successive cancellation list decoding.
\end{IEEEkeywords}

\IEEEpeerreviewmaketitle

\section{Introduction}

Motivated by the critical need to better support massive machine to machine communication in the upcoming cellular communications, contention resolution diversity slotted ALOHA (CRDSA) \cite{CRDSA}, irregular repetition slotted ALOHA (IRSA) \cite{IRSA} and coded slotted ALOHA (CSA) \cite{CSA:Paolini} were proposed to enhance the throughput of uncoordinated random access schemes by using the iterative successive interference cancellation (SIC) technique to resolve packet collisions. In slotted ALOHA schemes, a binary vector, called as the user¡¯s slot pattern (SP), is used to denote the slot positions whereby the copies of the user information packet will be transmitted within these slots and marked them as '$1$'s, otherwise marked as '$0$'s. The number of non-zero elements in the vector is denoted as SP¡¯s weight. For example, a user's SP is $(1,0,0,0,0,0,1,0)$ which means that the $1$st and $7$th slots are used to transmit the user's packet copies in a slot-frame. It is well known that the CRDSA scheme uses an identical $1/2$-rate repetition encoding of the information packet for each active user, that is, guided by optimized weight-$2$ SPs, each user transmits twice copies within a slot-frame simultaneously. Compared to the CRDSA, the most different aspect of the IRSA schemes lies in the multiple weights of SPs. In the CSA scheme, as a generalization of IRSA scheme, before the transmission, the information packets from each user are partitioned and encoded into multiple shorter packets via local packet-oriented codes at the media access control layer. Correspondingly, at the receiver side, the SIC process combined with the local decoding for the packet-oriented codes to recover collided packets. The construction of the SPs is one of the key challenges to increasing the throughput of these slotted ALOHA schemes. To this end, the extrinsic information transfer (EXIT) chart is used and the asymptotic performance \cite{Asymptotic_Perfo_of_CSA_CL:2018} and non-asymptotic performance \cite{Non_asymptotic_CSA_ISIT:2019} of CSA schemes were studied.

In slotted ALOHA schemes, the transmitted packets are suffered by two kinds of erasure channels, named slot erasure channel (SEC) and packet erasure channel (PEC). For wireless communication systems with limited transmit powers, the strong external interference may overwhelm all the received packets in a particular slot interval. It implies that all of the transmitted packets in a slot are erased with a certain probability, which is namely SEC. Besides, due to the effect of deep fading in wireless transmissions, there exists a certain erasure rate for the transmitted packets, namely PEC. In \cite{err_floor_CSA}, the performance of the IRSA scheme over PECs was investigated, where the error floor of packet loss rate was analyzed and the code distributions were designed to minimize this error floor. In \cite{CSAErasureChannels:Sun}, the design and analysis of CSA and IRSA schemes over erasure channels (include SECs and PECs) were investigated, and the asymptotic throughput of CSA and IRSA schemes over erasure channels were derived.

Furthermore, some practical issues should be addressed for the ISRA and CSA schemes. The first issue is pointer processing. As mentioned in \cite{CSA_magz:Paolini}, one of the underpinning assumptions for the slotted ALOHA is that each replica is equipped with pointers to the slots containing other replicas transmitted by the same user. However, when the massive active users access a slotted ALOHA scheme, it is not trivial to generate the pointers, nor is the cost of sending many pointers negligible. For short packet communication, the random access protocol sequence is used as each user's SP to avoid the pointer operation \cite{Grouptest_shortpacket_ISIT:2019}. Another more elegant approach to address this issue is to embed in each replica a user-specific seed of a pseudorandom generator or a row index of a constructed SPs' look-up table, which are known both for the users and the base station (BS) \cite{CSA_magz:Paolini}. The second issue is the SP construction for the erasure channels. To address this issue, the EXIT charts are using to optimize the SPs. Nevertheless, the asymptotic throughputs of CSA and IRSA schemes over the erasure channels show that this optimization method is a capacity-approaching method \cite{CSAErasureChannels:Sun}. Other issues of slotted ALOHA schemes were also researched in \cite{Enhanc_CRDSA_Power_Div_TVT:2018} \cite{PLNetCoding_for_RA_TVT:2019}.

Recently, as a new concept in information theory, channel polarization was discovered in the constructive capacity-achieving families of codes for symmetric memoryless channels and later generalized to source coding, multiuser channels, and other problems. The codes using the polarization phenomenon to construct (encode) is named polar codes \cite{polarcode_Arikan}, which provably achieve the capacity of any symmetric memoryless channels with successive cancellation (SC) decoding.

To address the above issues, we propose a new polar slotted ALOHA (PSA) framework, which use channel polarization to construct the identical SP set within each active user and the BS. Before transmitting, the identical SP set is constructed within each active user and BS when the number of active users is known. In this way, the handling pointers' procedure is avoided in the PSA schemes. Moreover, the asymptotic throughput of PSA schemes is capacity-achieving when the number of slots within the slot-frame approaches infinity. The contributions of this work are summarized as:
\begin{enumerate}
  \item[1)] A theoretical analysis framework for the PSA schemes over SECs is provided. Based on the packet-oriented operation for the overlap packets when they conflict in a slot, it is demonstrated that the operation guarantees the packet-based polarization transform maintains the polarization phenomenon regardless of the length of the packet. And it is proved that the capacity of SECs is achievable when the number of slots in the slot-frame tends to infinity.
  \item[2)] Two SP assignment methods for the PSA scheme are developed guided by the packet-based polarization. One is the SPA method with variable SEP (SPA-v) and another is the SPA method with a fixed SEP value (SPA-f). In the procedure of the two SPA methods, for each user and the BS, a capacity-ordered index sequence $c_1^N$ is first computed, and then, the identical SP sets are constructed. Finally, with the aid of the $c_1^N$, each user selects their own SP from the SP set. The different aspect of the two SPA methods lies in the calculating process of $c_1^N$. The SPA-v is online computing the sequence $c_1^N$ with a variable SEP. However, the SPA-f is offline calculating $c_1^N$ with a fixed SEP value and pro-stored the sequence $c_1^N$ into a look-up table and equipped in each user and the BS.
  \item[3)] A packet-oriented successive cancellation (pSC) and a pSC list (pSCL) decoding algorithm of the PSA are developed. Furthermore, the finite-slot non-asymptotic throughput bounds and the asymptotic throughput of the PSA schemes using the pSC decoding are investigated.
\end{enumerate}

The paper is organized as follows. Firstly, the slotted ALOHA system model and some preliminaries are introduced in Section \ref{section_system_model_and_some_preliminaries}.  Second, the polarization transformation for the SECs based on the packet-oriented operation is investigated in Section \ref{section Polarization Transformation for SECs based on Packet-Oriented Operation}. The PSA scheme includes two SPA methods and the pSC/SCL decoding algorithm are presented, and the throughput analysis of the PSA is also provided in Section \ref{section_Proposed_PSA_schemes_over_SECs}. Finally, simulation results are presented in Section \ref{section_simulation_results} and the conclusions are given in Section \ref{section_conclusions}.

\section{System Model and Some Preliminaries}\label{section_system_model_and_some_preliminaries}

\begin{figure*}[htbp]
\centerline{\includegraphics[scale=0.75]{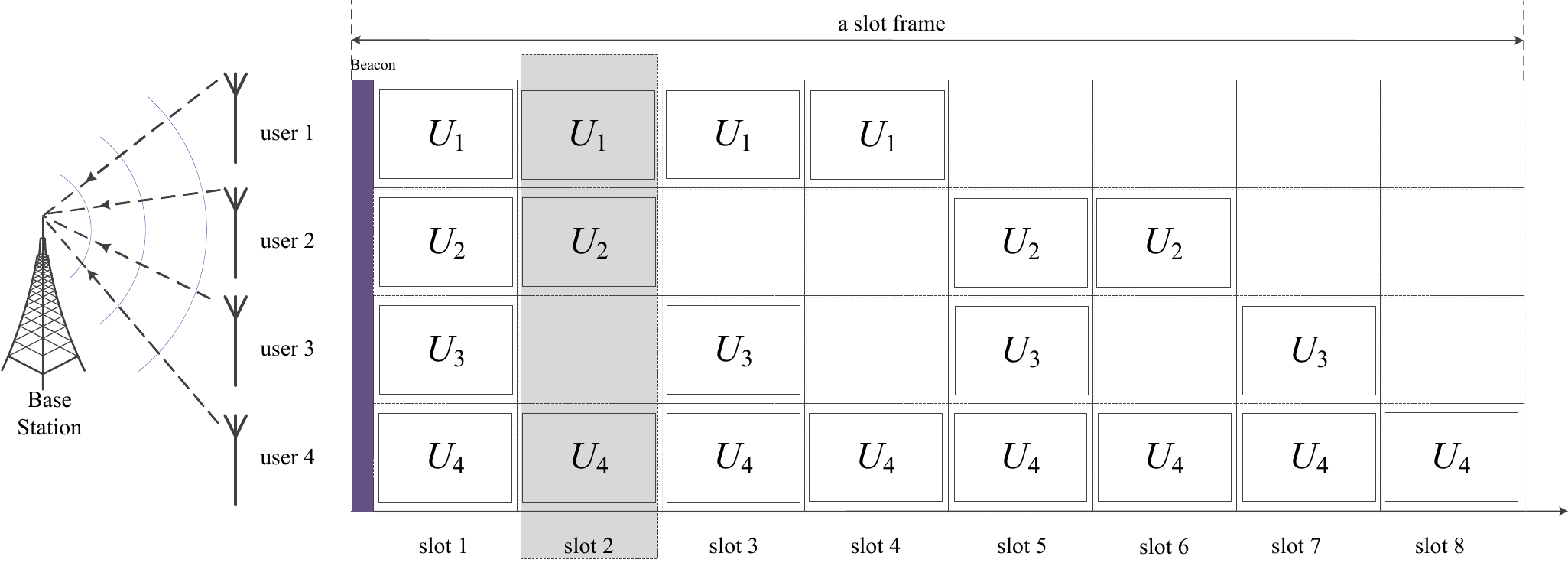}}
\caption{An example of the proposed PSA schemes over the slot erasure channels where the slotted packet within the $2$nd slot of the slot-frame is erased in the BS receiver. }
\label{fig.1}
\end{figure*}

In a slot-frame for the slotted ALOHA system, there are $M$ active users who attempt to transmit their information packets to a common receiver BS via a shared channel which consists of $N$ slots with an identical duration. We call the packets which active users transmit to the BS as information packets. At the received side, the packet in each slot of the slot-frame is named as slotted packet $X_i$, $i \in \{1,2,..,N\}$.

Similar to the previous works of slotted ALOHA schemes, we make the following assumptions:
\begin{enumerate}
  \item[A.1] Each uncoordinated user transmits a single information packet per slot-frame;
  \item[A.2] The number of active users $M$ is identified by each active user and the BS;
  \item[A.3] Each information packet or slotted packet contains $r$ bits and fits one slot interval. That is, each packet can be described as a bit-vector with $r$ elements.
\end{enumerate}

The following notations are used in the paper. We use the notation $\overline{x}\triangleq 1-x$ for $x\in \{0,1\}$. The information packet of the user $t$ is denoted by $U_t =({u}_{t,1},{u}_{t,2},...,{u}_{t,r}) = ({u}_t)_1^r  \in \{0,1\}^r$, $0\leqslant t\leqslant M$, and a slotted packet is denoted by $X_k = (x_k)_1^r = (x_{k,1},x_{k,2},...,x_{k,r})$, and hence $X_k[i]={x}_{k,i}$ is the $i$th elements of the slotted packet, $1\leqslant k\leqslant N$. Besides, an $N \times N$ matrix is denoted by ${{\bf{F}}_N}$. The ${\rm{{\cal S}}}$ denotes a set, and the cardinality of the set ${\rm{\cal S}}$ is denoted by $|{\rm{{\cal S}}}|$.

\subsection{Slotted ALOHA Procedure}

An example of the slotted ALOHA schemes over an SEC is shown in Fig. \ref{fig.1}. There are $M = 4$ active users who want to transmit information packets $U_1^4$ to the BS by through the slotted ALOHA scheme which includes $N=8$ slots in each slot-frame. Before received by the BS, the packets are suffered by the SEC, and resulting in some slotted packets are erased. As shown in Fig. \ref{fig.1}, the second slotted packet is erased which is indicated as a gray slot interval.

Just like playing a carousel game in a playground, we need two phases, waiting and playing. In addition to the packet diversity due to the copies, there is a waiting procedure before the active users access the slotted ALOHA phase which is controlled by BS beacon \cite{beacon} labeled in Fig. \ref{fig.1}.

While waiting for the random access a slotted ALOHA scheme, we assume that each active user broadcasts an access request signal in a random time and the access request signal can be detected by other active users and the BS. Following this assumption, the number $M$ of the active users can be identified by each active user and the BS by using the statistical counting method. \footnote{The collision of access request signals should be avoided when the request is initiated. It is assumed that each user can detect whether their request conflicts. When their request conflicts, they are asked to withdraw and wait for a random time to initiate an access request again.}

When each active user received the BS beacon and before transmitting their information packets, we assume the following assumption holds:
\begin{enumerate}
  \item[A.4] Each active user and the BS can identify the order of active users by detecting the request queue during the waiting process. The order of active users is indicated by labeling user $M$, ... , user $2$, user $1$.
\end{enumerate}

The user $M$ is an active user whose request is first detected, ..., and so on. That is, the input of the slotted ALOHA is an ordered information packet sequence $(U_1,...,U_M)$.

Similar to previous works on slotted ALOHA schemes, the offered traffic load (packets/slot) of PSA is defined as
\begin{equation}\label{eq_offered_traffic}
  G = M/N.
\end{equation}

The throughput efficiency (packets/slot) is defined as
\begin{equation}\label{eq_throughput_efficiency}
  T = GP_u
\end{equation}
where the $P_u$ is the information packets recovery probability of all active users in the BS receiver.

\subsection{Polar Codes}

Polar codes are a new class of error-correcting codes, proposed by Ar{\i}kan in \cite{polarcode_Arikan}, which provably achieve the capacity of any symmetric binary-input memoryless channels with an efficient SC decoding. The asymptotic effectiveness of SC decoding derives from the fact that the polarized synthetic channels tend to become either noiseless or completely noisy, as the block-length goes to infinity. In the polar encoding process, the noiseless polarized channels are used to send the information bits, and the rest polarized channels are assigned by the fixed values, such as zeros. Mixing information bits with fixed bits to form a source bit sequence $u_1^N$.

The main process of polar transformation is to combine the source bit sequence $u_1^N$ by repeated applying the polarization kernel
$ {\bf{F}}_2 =
  \big[
    \begin{smallmatrix}
        1 &0 \\
        1 &1
    \end{smallmatrix}
\big] $ with $n$ times, and hence $N=2^n$, $n \in \{1,2,...\}$. The generator matrix is formed by selecting the row vectors of ${\bf{F}}_2^{ \otimes n}$ with indices within the information index set $\mathcal{I}$ , where ${\left( \cdot  \right)^{ \otimes n}}$ denotes the $n$th Kronecker power. One of the key challenges for the polar codes is to construct the information index set $\mathcal{I}$ which is governed by the reliability metrics of the polarized channels.

Mathematically, the encoded bit sequence is
\begin{equation}\label{Eq_Polar_encoding}
  x_1^N = u_1^N \cdot {\bf{F}}_2^{ \otimes n}
\end{equation}
where the $|\mathcal{I}|$ information bits are loaded into the source sequence $u_1^{N}$ at position indices within index set $\mathcal{I}$, and other bits are set to the fixed zero values \cite{polarcode_Arikan}.

\section{Polarization Transformation for SECs based on Packet-Oriented Operation}\label{section Polarization Transformation for SECs based on Packet-Oriented Operation}
In this section, we first show the packet-oriented operation for the packets when they are overlapped within a slot of the slotted ALOHA frame. And then, the SEC and its equivalent product compound channel model are analyzed. Finally, packet-based polarization transformation for $2^r$-ary SECs based on the packet-oriented operation is investigated.

\subsection{Packet-Oriented Operation for Overlap Packets}
The packet-oriented operation for the overlap packets, when they conflict in a slot, has the following properties:
\begin{enumerate}
  \item Each information packet and the overlapped packet are fit in exactly one slot interval. That is, under the assumption A.$3$, the length of bits in the overlapped packet equals to that of the input information packet, which means that the output packet of the operation for the overlapped packets have a bit-width consistent with each input information packet;
  \item The packet-oriented operation is reversible. That is to say, assuming that two information packets are operated by the packet-oriented operation, when one of two information packets is clean, another information packet is completely recovered by the reverse operation.
\end{enumerate}

\begin{figure}[htbp]
\centerline{\includegraphics[scale=0.75]{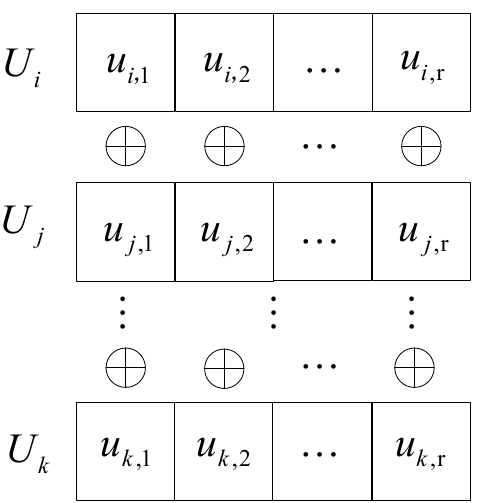}}
\caption{The packet-oriented operation for the packets when they conflict in a slot.}
\label{fig.2}
\end{figure}

An illustration of a packet-oriented operation that satisfies the above two conditions is shown in Fig. \ref{fig.2}. The packet-oriented operation is based on the packet-component $mod~2$ sum and performed bit-by-bit in parallel. And then, each bit of the overlapped slotted packet is computed independently of each other without carry. Consequently, in the following, we only consider the packet-component $mod~2$ sum as the packet-oriented operation in the proposed PSA schemes.

\begin{definition}
Let vector $U_i = ({u}_{i,1},{u}_{i,2},...,{u}_{i,r})$ and vector $U_j = ({u}_{j,1},{u}_{j,2},...,{u}_{j,r})$ denote two information packets, the output of the packet-oriented operation is defined as
\begin{equation}\label{eq_def_packet_oriented_operation}
  {U_i} \boxplus {U_j} = ({u}_{i,1}\oplus {u}_{j,1},{u}_{i,2}\oplus {u}_{j,2},...,{u}_{i,r}\oplus {u}_{j,r}).
\end{equation}
\end{definition}

With the definition of the packet-oriented operation, the following propositions hold.
\begin{proposition}\label{propo_1}
Let $X_k = {U_i} \boxplus {U_j}$ and one of input vector ${U_i}$ is known, another input is clear by $U_j = {U_i} \boxplus {X_k} = {X_k} \boxplus {U_i}$.
\end{proposition}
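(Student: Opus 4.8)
The plan is to prove Proposition \ref{propo_1} directly from the definition of the packet-oriented operation in \eqref{eq_def_packet_oriented_operation}, using only the algebraic properties of the bitwise XOR $\oplus$ on $\{0,1\}$. The statement has two parts: first, that $U_i \boxplus X_k = X_k \boxplus U_i$ (commutativity), and second, that this common value equals $U_j$ when $X_k = U_i \boxplus U_j$. Both will follow from the fact that $\boxplus$ acts componentwise as $\oplus$, so it inherits commutativity and associativity from $\oplus$, together with the identities $a \oplus a = 0$ and $a \oplus 0 = a$ for every bit $a$.

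First I would establish commutativity: for each coordinate $\ell \in \{1,\dots,r\}$, the $\ell$th component of $U_i \boxplus X_k$ is $u_{i,\ell} \oplus x_{k,\ell}$, and since $\oplus$ is commutative on $\{0,1\}$ this equals $x_{k,\ell} \oplus u_{i,\ell}$, which is the $\ell$th component of $X_k \boxplus U_i$. Since this holds for all $\ell$, the two vectors are equal. Next I would substitute $X_k = U_i \boxplus U_j$ and compute componentwise: the $\ell$th component of $U_i \boxplus X_k$ becomes $u_{i,\ell} \oplus (u_{i,\ell} \oplus u_{j,\ell})$. By associativity of $\oplus$ this is $(u_{i,\ell} \oplus u_{i,\ell}) \oplus u_{j,\ell} = 0 \oplus u_{j,\ell} = u_{j,\ell}$, which is exactly the $\ell$th component of $U_j$. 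Collecting all coordinates gives $U_i \boxplus X_k = U_j$, and combining with the commutativity already shown yields $U_j = U_i \boxplus X_k = X_k \boxplus U_i$, as claimed.

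There is essentially no obstacle here; the result is a straightforward consequence of $(\{0,1\},\oplus)$ being a group of exponent two (equivalently, the reversibility property already asserted in item~2 of the list preceding the definition). The only thing worth being careful about is to phrase the argument at the level of individual bit-coordinates so that it is manifestly independent of the packet length $r$, which is the point the paper emphasizes. If desired, the proof can be compressed to a single line invoking that $\boxplus$ is the group operation of $(\{0,1\}^r,\oplus) \cong (\mathbb{Z}/2\mathbb{Z})^r$, in which every element is its own inverse, so $U_i \boxplus (U_i \boxplus U_j) = (U_i \boxplus U_i) \boxplus U_j = \mathbf{0} \boxplus U_j = U_j$, with commutativity giving the second expression $X_k \boxplus U_i$.
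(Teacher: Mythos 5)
Your proof is correct and follows essentially the same route as the paper's: reduce to a single bit-coordinate and invoke the self-inverse property of $\oplus$ on $\{0,1\}$. You simply spell out the associativity/cancellation steps that the paper dismisses with ``obviously,'' which is a welcome bit of extra rigor but not a different argument.
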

\begin{proof}
For any $1 \leqslant h \leqslant r$, when the $X_{k}[h]$ and $U_{i}[h]$ are known, using the definition of packet-oriented operation, $X_{k}[h] = U_{i}[h] \oplus U_{j}[h]$. Obviously, $U_{j}[h] = U_{i}[h] \oplus X_{k}[h] = X_{k}[h] \oplus U_{i}[h]$. Therefore, the formula $U_{j} = U_{i} \boxplus X_{k} = X_{k} \boxplus U_{i}$ holds.
\end{proof}

\begin{proposition}\label{propo_2}
Under the packet-oriented operation $\boxplus$ with two input vectors $U_i$ and $U_j$, each element of the output vector $X_{k}[h] = U_{i}[h] \oplus U_{j}[h], 1\leqslant h \leqslant r$, is only dependent on the $h$th element of the input vectors.
\end{proposition}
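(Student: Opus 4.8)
The plan is to prove Proposition~\ref{propo_2} directly from the coordinatewise definition of the packet-oriented operation in \eqref{eq_def_packet_oriented_operation}. The statement asserts that the $h$th coordinate of the output $X_k = U_i \boxplus U_j$ depends only on the $h$th coordinates of the two inputs; this is essentially a restatement of the fact that $\boxplus$ is defined componentwise with no carry, so the proof amounts to reading off \eqref{eq_def_packet_oriented_operation} and observing the absence of cross-coordinate coupling.

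First I would fix an arbitrary index $h$ with $1 \leqslant h \leqslant r$ and extract the $h$th component from the definition \eqref{eq_def_packet_oriented_operation}, which gives $X_k[h] = u_{i,h} \oplus u_{j,h} = U_i[h] \oplus U_j[h]$. Next I would argue that the binary XOR $\oplus$ is a function of exactly its two scalar arguments $u_{i,h}$ and $u_{j,h}$, and that these two scalars are precisely the $h$th coordinates of $U_i$ and $U_j$; no other coordinate $u_{i,h'}$ or $u_{j,h'}$ with $h' \neq h$ enters the expression. To make the ``only dependent on'' claim rigorous I would phrase it as: if two pairs of input packets $(U_i, U_j)$ and $(U_i', U_j')$ satisfy $U_i[h] = U_i'[h]$ and $U_j[h] = U_j'[h]$, then the corresponding outputs satisfy $(U_i \boxplus U_j)[h] = (U_i' \boxplus U_j')[h]$, which follows immediately since both equal $U_i[h] \oplus U_j[h]$. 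This is the content of the bit-by-bit, carry-free parallel computation already described in the paragraph preceding Definition~1.

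There is no real obstacle here; the only thing to be careful about is stating the dependence claim in a well-defined way, since ``dependent on'' is informal. I would therefore spend one sentence making precise that it means the output coordinate is determined by (is a function of) the corresponding input coordinates alone, and then the one-line computation closes the argument. The proof is essentially immediate and serves mainly to record the separability property that later justifies applying the scalar polarization transform bitwise across the $r$ bits of a packet, which is the point exploited in Section~\ref{section Polarization Transformation for SECs based on Packet-Oriented Operation} to show the packet-based polarization is independent of the packet length.
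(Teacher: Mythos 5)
Your proof is correct, but it takes a genuinely different route from the paper's. The paper argues by contradiction: it supposes $X_k[h] = U_i[h] \oplus U_j[t]$ for some $t \neq h$, applies Proposition~\ref{propo_1} to get $U_j[t] = U_i[h] \oplus X_k[h] = U_j[h]$, and concludes that $t = h$, contradicting the hypothesis. You instead read the claim directly off Definition~1: the $h$th output coordinate is $U_i[h] \oplus U_j[h]$ by construction, and you make the informal phrase ``only dependent on'' precise as a functional-determination statement (two input pairs agreeing in coordinate $h$ yield outputs agreeing in coordinate $h$). Your version is more elementary and, arguably, more rigorous --- the paper's final inference from $U_j[t] = U_j[h]$ to $t = h$ is not logically forced, since distinct coordinates of a packet can share the same bit value; your formalization avoids that issue entirely by never needing to identify indices from values. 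What the paper's route buys is an explicit link to the reversibility property of Proposition~\ref{propo_1}; what yours buys is a clean, self-contained statement of the carry-free separability that the later packet-based polarization (Lemma~\ref{lemma_SEC_product_compound_channel} and Remark~\ref{remark_SEC_product_compound_channel}) actually relies on. Either is acceptable; yours is the tighter argument.
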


\begin{proof}
Without loss of generality, supposed  $X_{k}[h] = U_{i}[h]\oplus U_{j}[t]$, $h\neq t$, $1\leqslant h,t \leqslant r$. Using Proposition \ref{propo_1}, we get $U_{j}[t] = U_{i}[h] \oplus X_{k}[h] = U_{j}[h]$, and which means that $t=h$ holds, so the original hypothesis is not true. Therefore, this proposition is established.
\end{proof}

\subsection{Slot Erasure Channel}

Let $S(\epsilon):\mathcal{X}^r \rightarrow \mathcal{Y}^r$, $\mathcal{X}^r = \{0,1\}^r$, $\mathcal{Y}^r = \{0,1\}^r \cup \{E\}$ be an SEC with input alphabet $\mathcal{X}^r$, output alphabet $\mathcal{Y}^r$, and transition probabilities $S({{y}}^r|{x}^r), {y}^r \in {\mathcal{Y}^r}, {{x}}^r \in {\mathcal{X}^r}$, and $E=(e,e,...,e)$ is the erasure packet of SEC with the probability $\epsilon$. Obviously, the SEC $S(\epsilon)$ is a symmetric channel, and $|\mathcal{X}^r|=2^r \triangleq q$, $|\mathcal{Y}^r|=2^r+1 = q+1$. When $r=1$, the SEC $S(\epsilon)$ degenerates into a binary erasure channel (BEC) $W(\epsilon): \mathcal{X} \rightarrow \mathcal{Y}$, $\mathcal{X} = \{0,1\}$, $\mathcal{Y} = \{0,e,1\}$.

The channel capacity of $S$ and $W$ is denoted by $I(S)$ and $I(W)$. Obviously, the capacity of a BEC $W$ with erasure probability $\epsilon$ is $I(W) = 1-\epsilon$.

\begin{lemma} \label{capacity_SEC}
The symmetric capacity of the SEC $S(\epsilon)$ is $r(1-\epsilon)$ bits/channel use.
\end{lemma}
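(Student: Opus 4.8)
The plan is to compute the symmetric capacity of the SEC $S(\epsilon)$ directly from its definition as mutual information under the uniform input distribution. First I would fix the input $X^r$ to be uniformly distributed over $\mathcal{X}^r = \{0,1\}^r$ (which is the capacity-achieving distribution since $S(\epsilon)$ is symmetric) and write $I(S) = H(X^r) - H(X^r \mid Y^r)$, where $H(X^r) = \log_2 q = r$ bits. The key structural observation is that under the slot-erasure model the output $Y^r$ is either the erasure packet $E=(e,\dots,e)$, which occurs with probability $\epsilon$ and conveys nothing about $X^r$, or $Y^r = X^r$ exactly, which occurs with probability $1-\epsilon$ and determines $X^r$ completely; there is no partial-erasure case, since all $r$ bits in a slot are erased jointly.

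Second I would evaluate the conditional entropy by splitting on the erasure event: $H(X^r \mid Y^r) = \epsilon \, H(X^r \mid Y^r = E) + (1-\epsilon)\, H(X^r \mid Y^r = X^r)$. On the erasure event the posterior on $X^r$ is still uniform, so $H(X^r \mid Y^r = E) = r$; on the non-erasure event $X^r$ is known, so that term is $0$. Hence $H(X^r \mid Y^r) = \epsilon r$, and therefore $I(S) = r - \epsilon r = r(1-\epsilon)$ bits per channel use.

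An equivalent and perhaps cleaner route, which I would mention as a remark, is to invoke the product-compound structure of the SEC alluded to in the section introduction: a single slot use of $S(\epsilon)$ is the joint transmission of $r$ bits through $r$ BECs whose erasure events are perfectly correlated (all erase together or none does). Since capacity under the uniform input adds over the component bits and each BEC contributes $1-\epsilon$, we again get $I(S) = r(1-\epsilon)$; the perfect correlation of the erasure events changes the error/outage statistics but not the average mutual information.

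I do not expect a genuine obstacle here — the statement is essentially a definitional computation. The only point requiring a line of care is justifying that the uniform input distribution is optimal, i.e. that the symmetric capacity equals the true capacity; this follows from the symmetry of $S(\epsilon)$ (invariance of the transition law under the bit-flip group acting on $\mathcal{X}^r$), which the paper has already noted, so it can be stated rather than belabored.
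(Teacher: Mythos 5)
Your proof is correct and is essentially the same computation as the paper's: both evaluate $I(X^r;Y^r)$ under the uniform input for the two output events $Y^r=X^r$ and $Y^r=E$, you via the decomposition $H(X^r)-H(X^r\mid Y^r)$ and the paper by substituting the transition probabilities directly into the symmetric-capacity sum. Your closing remark about the $r$ perfectly correlated BECs matches the paper's Remark 1 and Eq. (8), which the paper derives \emph{from} this lemma rather than using to prove it, so it is fine as a consistency check but should not be offered as an independent proof without first justifying that capacity adds over the correlated components.
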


\begin{proof}
The transition probabilities of SEC $S(\epsilon)$ are
\begin{equation}
  S({y}^r|{x}^r)= \left\{ \begin{array}{ll}
                             1-\epsilon & \textrm{if ${y}^r = {x}^r $}\\
                               \epsilon & \textrm{if ${y}^r = {E}$}
                           \end{array} \right. \label{eq_solt_erasure_model}
\end{equation}

With the probability of each input alphabet $1/q$, the symmetric capacity of the channel $S$ can be written as
\begin{equation}
   I(S) = \sum_{x^r \in {\mathcal{X}^r} }\sum_{y^r \in {\mathcal{Y}^r} } {\frac{1}{q} S({y}^r|{x}^r) \log \frac{S({y}^r|{x}^r)}{\sum_{(x')^r \in {\mathcal{X}^r} }{\frac{1}{q} S({y}^r|(x')^r)} } }. \label{eq_def_sym_capacity}
\end{equation}

Substituting Eq. (\ref{eq_solt_erasure_model}) into Eq. (\ref{eq_def_sym_capacity}), we obtain that
\begin{equation*}
   \begin{aligned}
       I(S) & =  \sum_{x^r \in {\mathcal{X}^r} }\sum_{y^r \in {\mathcal{Y}^r} } {\frac{1}{q} S({y}^r|{x}^r) \log \frac{S({y}^r|{x}^r)}{ \sum_{(x')^r \in {\mathcal{X}^r} }{\frac{1}{q} S({y}^r|(x')^r)} } }  \\
            & + \frac{1}{q} \sum_{x^r \in {\mathcal{X}^r} } \sum_{y^r = {{E}} }  {S({y}^r|{x}^r) \log \frac{ q S({y}^r|{x}^r)}{ \sum_{(x')^r \in {\mathcal{X}^r} }{ S({y}^r|(x')^r)} } } \\
            & = \frac{1}{q} \sum_{{y^r=x^r};{x^r} \in {\mathcal{X}^r} } { S({y}^r|{x}^r) \log \frac{ {q} S({y}^r|{x}^r)}{ \sum_{(x')^r \in {\mathcal{X}^r} }{S({y}^r|(x')^r } } } \\
            & + \frac{1}{q} \sum_{x^r \in {\mathcal{X}^r} }  { S(E|{x}^r) \log \frac{ {q} S(E|{x}^r)}{ \sum_{(x')^r \in {\mathcal{X}^r} }{S(E|(x')^r } } } \\
            & = \frac{1}{q} \Bigg{(} q { (1-\epsilon) \log \frac{ {q} (1-\epsilon) }{ (1-\epsilon) } } \Bigg{)}
                + \frac{1}{q} \Bigg{(} q \times { \epsilon \times \log \frac{ {q}\times\epsilon }{ q\times\epsilon } } \Bigg{)}\\
            & =  r(1-\epsilon),
      \end{aligned}
\end{equation*}
where the base of the logarithm is $2$, and then the unit is bits/channel use.
\end{proof}

\begin{remark}\label{remark_relationship_SEC_BEC}
From \emph{Lemma} \ref{capacity_SEC}, we get that the capacity of a SEC $S(\epsilon)$ is $r$ times that of a BEC $W(\epsilon)$.
\end{remark}

From \emph{Remark} \ref{remark_relationship_SEC_BEC} and the definition of the product channel \cite{product_channel_Shannon:1956}\cite{parallel channels:1976}, the SEC $S$ can be expressed as a product compound channel which contains $r$ identical BECs, and it can be shown that
\begin{equation}\label{SEC}
  S = \underbrace{W \times W... \times W}_{r}
\end{equation}
where the $r$ identical BECs $W$ mean that they suffer the identical channel noise realization with that of the SEC $S$, and this is caused by the packet-oriented operation when the slotted packets over the SEC.

\subsection{Polarization Transformation for SECs}

\begin{figure}[htbp]
\centerline{\includegraphics[scale=1]{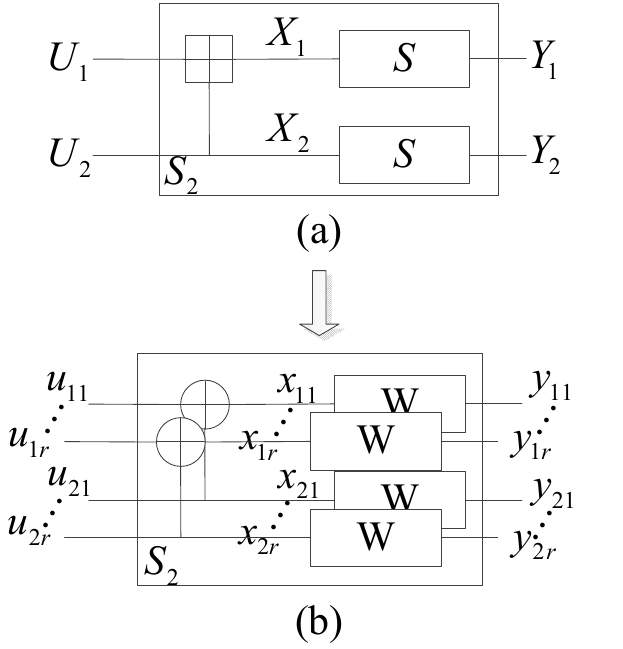}}
\caption{(a) Using the Ar{\i}kan polarizing Kernel ${\mathbf{F}}_2$ with packet-oriented operation $\boxplus$, the combined channel $S_2$ of the SEC $S$. (b) The equivalent of the combined channel $S_2$ comprises $r$ identical combined channels $W_2$.}
\label{fig.3}
\end{figure}

For the SECs, the packet-oriented combining channel can be also generated by recursive using the Ar{\i}kan's polarizing kernel ${\mathbf{F}}_2$ with the packet-oriented operation $\boxplus$. For the first level of the recursion combines two independent copies of SEC $S$ as shown in Fig. \ref{fig.3}.(a) and obtains the channel $S_2$: $(\mathcal{X}^r)^2 \rightarrow (\mathcal{Y}^r)^2$ with the transition probabilities
\begin{equation}
   S_{2}(Y_1,Y_2|U_1,U_2) \triangleq S(Y_1|U_1 \boxplus U_2)S(Y_2|U_2). \label{eq_def_S2}
\end{equation}

The synthetic channels $S^{-}$ and $S^{+}$ of the combined channel $S_2$ are defined as
\begin{equation}
   S^{+}(Y_1,Y_2,U_1|U_2) \triangleq {\frac{1}{q} S_2(Y_1,Y_2|U_1,U_2)} \label{eq_def_S+}
\end{equation}
and
\begin{equation}
   S^{-}(Y_1,Y_2|U_1) \triangleq \sum_{U_2 \in \mathcal{X}^r } {\frac{1}{q} S_2(Y_1,Y_2|U_1,U_2)}. \label{eq_def_S-}
\end{equation}
where $U_1,U_2 \in \mathcal{X}^r$, $Y_1,Y_2 \in \mathcal{Y}^r$.

\begin{lemma}\label{lemma_SEC_product_compound_channel}
The combined channel $S_2$ under the packet-oriented operation $\boxplus$ can be viewed as a product compound channel with $r$ identical combined channels $W_2$ with the $mod~2$ operation. That is

\begin{equation}\label{Eq_S2_Equivalent}
   S_{2}(Y_1,Y_2|U_1,U_2) = \prod_{i=1}^{r} {W_{2}(y_{1,i},y_{2,i}|u_{1,i},u_{2,i})}
\end{equation}
and its two synthetic channels are

\begin{equation}\label{Eq_S+_Equivalent}
   S^{+}(Y_1,Y_2,U_1|U_2) = \prod_{i=1}^{r} {W^{+}(y_{1,i},y_{2,i},u_{1,i}|u_{2,i})}
\end{equation}

\begin{equation}\label{Eq_S-_Equivalent}
   S^{-}(Y_1,Y_2|U_1) =  \prod_{i=1}^{r}{W^{-}(y_{1,i},y_{2,i}|u_{1,i})}.
\end{equation}

\end{lemma}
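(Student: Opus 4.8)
The plan is to reduce all three identities to a single structural fact: the packet-oriented transform acts on each of the $r$ bit-positions exactly as Ar{\i}kan's transform acts on a single bit. This holds because (i) the SEC erases an entire slot at once, so each slot-erasure event is common to all $r$ positions, and (ii) by the definition of $\boxplus$ in Eq.~(\ref{eq_def_packet_oriented_operation}) together with \emph{Proposition}~\ref{propo_2}, the $i$th coordinate of $U_1\boxplus U_2$ equals $u_{1,i}\oplus u_{2,i}$ and depends on no other coordinate. First I would restate Eq.~(\ref{SEC}) in the form needed here: $S$ is the product compound channel $\prod_{i=1}^{r}W$ in which all $r$ copies of $W$ share one slot-erasure realization, i.e.\ $S(Y|U)=\prod_{i=1}^{r}W(y_i|u_i)$ once the erasure packet is written $E=(e,\dots,e)$ and attention is restricted, as the compound structure requires, to outputs that are either fully clean ($Y=U$) or fully erased ($Y=E$). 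This is the only nontrivial input and is already supplied by \emph{Remark}~\ref{remark_relationship_SEC_BEC}.

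The three equations then follow by substitution. For Eq.~(\ref{Eq_S2_Equivalent}): insert the decomposition of $S$ into Eq.~(\ref{eq_def_S2}) and use (ii) to obtain $S_2(Y_1,Y_2|U_1,U_2)=\prod_{i=1}^{r}W(y_{1,i}|u_{1,i}\oplus u_{2,i})\prod_{i=1}^{r}W(y_{2,i}|u_{2,i})=\prod_{i=1}^{r}W_2(y_{1,i},y_{2,i}|u_{1,i},u_{2,i})$, where $W_2,W^{+},W^{-}$ denote the $r=1$ instances of Eqs.~(\ref{eq_def_S2}), (\ref{eq_def_S+}), (\ref{eq_def_S-}); note the $r$ copies of $W_2$ are still \emph{compound}, since they share the two slot-erasure events of $S_2$. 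For Eq.~(\ref{Eq_S+_Equivalent}): substitute this into Eq.~(\ref{eq_def_S+}), write $1/q=2^{-r}=\prod_{i=1}^{r}\tfrac12$, distribute the factor $\tfrac12$ into each term, and recognize $\tfrac12 W_2=W^{+}$. For Eq.~(\ref{Eq_S-_Equivalent}): substitute into Eq.~(\ref{eq_def_S-}); since $U_2$ ranges over $\{0,1\}^{r}=\prod_{i=1}^{r}\{0,1\}$ and the summand is a product over $i$ whose $i$th factor depends only on $u_{2,i}$, the distributive law turns the sum over $U_2$ into a product of $r$ one-bit sums, each equal to $\sum_{u_{2,i}\in\{0,1\}}\tfrac12 W_2(y_{1,i},y_{2,i}|u_{1,i},u_{2,i})=W^{-}(y_{1,i},y_{2,i}|u_{1,i})$.

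The algebra above is routine; the only real obstacle is the bookkeeping of the compound structure. One has to be explicit that Eq.~(\ref{SEC}) is a product of $r$ BECs \emph{tied to a single slot-erasure realization}, not of $r$ independent BECs, so that the asserted identities are equalities of product \emph{compound} channels and the $W_2$'s (resp.\ $W^{+}$'s, $W^{-}$'s) on the right inherit that common realization. Likewise, the splitting of the marginalization sum in the $S^{-}$ case — and of the normalization $1/q$ — is legitimate \emph{precisely because} $\boxplus$ never couples distinct bit-positions, which is exactly the role of \emph{Proposition}~\ref{propo_2} and the reason the statement would break for a packet operation involving carries. Finally, since $r$ enters nowhere in the reasoning, the same argument establishes the ``independent of packet length'' property announced in the introduction, which I would record as a short remark.
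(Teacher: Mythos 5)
Your proposal is correct and follows essentially the same route as the paper's Appendix~A proof: substitute the product-compound decomposition of $S$ from Eq.~(\ref{SEC}) into the definitions (\ref{eq_def_S2})--(\ref{eq_def_S-}), use the bitwise action of $\boxplus$ to regroup factors by coordinate, split $1/q$ as $\prod_{i=1}^{r}\tfrac12$ for $S^{+}$, and factor the sum over $U_2\in\{0,1\}^r$ into $r$ one-bit sums for $S^{-}$. Your explicit remark that the $r$ component BECs share a single slot-erasure realization (so the ``product'' is a compound-channel identity rather than a factorization into independent channels) is a welcome clarification of a point the paper leaves implicit, but it does not change the argument.
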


\emph{Lemma} $2$ is proved in Appendix A. From this \emph{Lemma}, the equivalent of the combined channel $S_2$ is shown in Fig. \ref{fig.3}.(b), which is a compound channel with $r$ identical of a combined channels $W_2$.

\begin{remark}\label{remark_SEC_product_compound_channel}
This transformation can be applied recursively to the two synthetic channels $S^{+}$, $S^{-}$ resulting in four synthetic channels of the form $S^{t_{1}t_{2}}$,$t_1,t_2 \in \{ +,- \}$. After $n$ steps, we obtain $N=2^n$ synthetic channels $S^{(j)}_{N}(Y_{1}^{N},U^{j-1}_{1}|U_{j})$\footnote{There is a bijection mapping between the left-most-significant-bit binary representation $j$ and vector $t_1^{n}\in \{+,-\}^n$ by replacing each $-$ that appears in $t_1^n$ with $0$ and each $+$ that appears in $t_1^n$ with a $1$.}, $j = 1,...,N$, and
\begin{equation}\label{Eq_relation_S_and_W}
   S^{(j)}_{N} = \prod_{i=1}^{r} {W^{(j)}_{N}(y_{1,i},...,y_{N,i},u_{1,i},...,u_{j-1,i}|u_{j,i})}
\end{equation}
which shows that the combined channel $S_{N}$ is equivalent to a product compound channel with $r$ identical combined channels $W_N$.
\end{remark}

On the capacity and Bhattacharyya parameter, a relationship between the compound SEC synthetic channels and its component BEC synthetic channels have the following \emph{Lemma} \ref{lemma_capcaity_combined_polarized_channel}.

\begin{lemma}\label{lemma_capcaity_combined_polarized_channel}
When a slotted ALOHA scheme suffered by an SEC with the SEP $\epsilon$, for $1\leqslant j\leqslant N$, the capacity and the Bhattacharyya parameter of the SEC synthetic channels are
\begin{equation} \label{eq_capcaity_combined_polarized_channel}
   I\Big{(}S_N^{(j)}\Big{)} = r I\Big{(}W_N^{(j)}\Big{)}
\end{equation}
\begin{equation} \label{eq_Z_parameter_combined_polarized_channel}
   Z\Big{(}S_N^{(j)}\Big{)} = r Z\Big{(}W_N^{(j)}\Big{)}.
\end{equation}

\end{lemma}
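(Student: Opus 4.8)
The statement is essentially a corollary of \emph{Lemma} \ref{lemma_SEC_product_compound_channel} and \emph{Remark} \ref{remark_SEC_product_compound_channel}, and the plan is to prove both identities together by induction on the number of polarization levels $n$ (with $N = 2^{n}$), using \emph{Lemma} \ref{capacity_SEC} for the base case and \emph{Lemma} \ref{lemma_SEC_product_compound_channel} for the inductive step. The guiding principle is that the structural identity ``a $2^{r}$-ary SEC is $r$ copies of a single BEC, all driven by the same erasure event'' is invariant under the Ar{\i}kan transform with the packet-oriented operation $\boxplus$; consequently each synthetic channel $S_{N}^{(j)}$ is again a $2^{r}$-ary SEC whose underlying component BEC is exactly $W_{N}^{(j)}$, and the two formulas then reduce to one application of \emph{Lemma} \ref{capacity_SEC} and one short Bhattacharyya computation.

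For the base case $n = 0$ we have $S_{1}^{(1)} = S$ and $W_{1}^{(1)} = W$. Then $I(S) = r(1-\epsilon) = r\,I(W)$ is \emph{Lemma} \ref{capacity_SEC} combined with $I(W) = 1-\epsilon$, while $Z(S) = r\,Z(W)$ follows by evaluating the Bhattacharyya parameter of the $2^{r}$-ary SEC directly from its transition law (\ref{eq_solt_erasure_model}): two distinct input packets stay confusable only at the erasure packet $E$, where both transition probabilities equal $\epsilon$, so accumulating this contribution over the $r$ identical BEC planes of the product compound channel (\ref{SEC}) gives $r\epsilon = r\,Z(W)$.

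For the inductive step, assume the two identities (and the structural claim) hold for all level-$n$ synthetic channels. Fix $j$ and apply \emph{Lemma} \ref{lemma_SEC_product_compound_channel} to $S_{N}^{(j)}$, which by hypothesis is a product compound channel of $r$ identical copies of the BEC $W_{N}^{(j)}$: its minus/plus synthetic channels $(S_{N}^{(j)})^{-}$ and $(S_{N}^{(j)})^{+}$ are again product compound channels of $r$ identical copies of $(W_{N}^{(j)})^{-}$ and $(W_{N}^{(j)})^{+}$, that is, of the level-$(n+1)$ channels $W_{2N}^{(2j-1)}$ and $W_{2N}^{(2j)}$ (here one uses the standard fact, provable by the same kind of induction, that the Ar{\i}kan transform of a BEC is again a BEC). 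Hence $S_{2N}^{(2j-1)}$ and $S_{2N}^{(2j)}$ are $2^{r}$-ary SECs, and \emph{Lemma} \ref{capacity_SEC} applied to each gives $I(S_{2N}^{(i)}) = r\,I(W_{2N}^{(i)})$ for $i \in \{2j-1, 2j\}$, while the base-case Bhattacharyya computation, now with erasure probability $Z(W_{2N}^{(i)})$, gives $Z(S_{2N}^{(i)}) = r\,Z(W_{2N}^{(i)})$. This closes the induction; and since (\ref{Eq_relation_S_and_W}) already records the iterated structural equivalence, one may also shortcut directly from there rather than re-running the recursion.

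The step I expect to be the main obstacle is the careful bookkeeping that the compound structure is genuinely preserved by the recursion — that forming the minus/plus synthetic channel commutes with the operation ``replace a BEC by $r$ copies of it sharing one erasure'', so that $S_{N}^{(j)}$ never degrades into a product of \emph{independent} BECs, whose capacity and Bhattacharyya parameter would not scale by the clean factor $r$. This is exactly the content of \emph{Lemma} \ref{lemma_SEC_product_compound_channel} for a single step; once it is in hand the two per-channel computations are routine, the capacity collapsing onto \emph{Lemma} \ref{capacity_SEC} and the Bhattacharyya parameter collapsing onto the single confusable output summed over the $r$ identical BEC planes. A secondary point worth stating explicitly is which $2^{r}$-ary Bhattacharyya convention is adopted, since the factor $r$ appearing in (\ref{eq_Z_parameter_combined_polarized_channel}) is precisely the number of planes over which that convention accumulates.
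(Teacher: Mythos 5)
Your proof is correct and reaches the stated identities, but its mechanics differ from the paper's. The paper's appendix does not run an explicit induction: it takes the iterated product decomposition of \emph{Remark}~\ref{remark_SEC_product_compound_channel}, Eq.~(\ref{Eq_relation_S_and_W}), as already established and obtains $I\big(S_N^{(j)}\big)=rI\big(W_N^{(j)}\big)$ in a single step by invoking the additivity of capacity for product channels (the cited Horibe result), with the $N=1$ base case supplied by \emph{Lemma}~\ref{capacity_SEC}; the Bhattacharyya identity is asserted by the same scaling but is not actually derived there. You instead carry the stronger structural invariant --- each $S_N^{(j)}$ is literally a $2^r$-ary SEC whose component BEC is $W_N^{(j)}$, with all $r$ planes sharing one erasure realization --- through the recursion, using \emph{Lemma}~\ref{lemma_SEC_product_compound_channel} for the one-step preservation and re-applying \emph{Lemma}~\ref{capacity_SEC} at every level. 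This buys two things. First, it sidesteps the product-channel additivity step, which is strictly stated for independent parallel channels, whereas here the $r$ component BECs are fully correlated through the shared erasure event; for erasure channels the symmetric capacity happens to coincide in either case, but your route makes this a non-issue rather than a silent coincidence. Second, it yields an actual computation of $Z\big(S_N^{(j)}\big)$, which the paper omits. Your closing caveat about the $2^r$-ary Bhattacharyya convention is well taken: under the standard pairwise-averaged definition one would get $Z(S)=\epsilon=Z(W)$ rather than $r\epsilon$, so the factor $r$ in Eq.~(\ref{eq_Z_parameter_combined_polarized_channel}) holds only under the per-plane accumulated convention you describe --- an ambiguity that is present in the paper itself and that you are right to flag explicitly.
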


\emph{Lemma} \ref{lemma_capcaity_combined_polarized_channel} is proved in Appendix B. With the conclusion of \emph{Lemma} \ref{lemma_capcaity_combined_polarized_channel}, the capacity and Bhattacharyya parameter recursion formula of the SEC synthetic channels can be obtained as they are shown in \emph{Theorem} \ref{theorem_SEC_capacity_and_Z_parameter}.

\begin{theorem}\label{theorem_SEC_capacity_and_Z_parameter}
For $1\leqslant j \leqslant N/2$ and $N=2^n$, the capacity of SEC synthetic channels can be recursively calculated as
\begin{equation}\label{SEC_info_capacity}
  \left\{\begin{aligned}
   I\Big{(}S_N^{(2j-1)}\Big{)} & = \frac{1}{r}I\Big{(}S_{N/2}^{(j)}\Big{)}^2 \\
     I\Big{(}S_N^{(2j)}\Big{)} & = 2I\Big{(}S_{N/2}^{(j)}\Big{)} - \frac{1}{r}I\Big{(}S_{N/2}^{(j)}\Big{)}^2
   \end{aligned}\right.
\end{equation}
and the Bhattacharyya parameter for SEC synthetic channels are
\begin{equation}\label{SEC_Z_parameter}
  \left\{ \begin{aligned}
 Z\left( {S_{N}^{\left( {2j - 1} \right)}} \right) &= 2Z\left( {S_{N/2}^{\left( j \right)}} \right) - \frac{1}{r}{Z^2}\left( {S_{N/2}^{\left( j \right)}} \right) \\
 Z\left( {S_{N}^{\left( {2j} \right)}} \right) &= \frac{1}{r}{Z^2}\left( {S_{N/2}^{\left( j \right)}} \right). \\
 \end{aligned} \right.
\end{equation}
\end{theorem}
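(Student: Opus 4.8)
The plan is to reduce everything to Ar\i kan's classical recursions for the BEC synthetic channels and then transport them across the equivalence established in \emph{Lemma} \ref{lemma_capcaity_combined_polarized_channel}. Recall that for a BEC $W(\epsilon)$ the polarized channels $W_N^{(j)}$ are themselves BECs, with erasure probabilities $\delta_N^{(j)}$ satisfying the single-step recursion $\delta_N^{(2j-1)} = 2\delta_{N/2}^{(j)} - (\delta_{N/2}^{(j)})^2$ and $\delta_N^{(2j)} = (\delta_{N/2}^{(j)})^2$; equivalently, writing $I(W_N^{(j)}) = 1-\delta_N^{(j)}$ and $Z(W_N^{(j)}) = \delta_N^{(j)}$ (the capacity and Bhattacharyya parameter of a BEC coincide with $1-\delta$ and $\delta$ respectively), one has $I(W_N^{(2j-1)}) = I(W_{N/2}^{(j)})^2$, $I(W_N^{(2j)}) = 2I(W_{N/2}^{(j)}) - I(W_{N/2}^{(j)})^2$, and the complementary pair for $Z$. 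These are standard facts about BEC polarization and may be invoked directly.

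First I would fix $1 \leqslant j \leqslant N/2$ and apply \emph{Lemma} \ref{lemma_capcaity_combined_polarized_channel} at block length $N$ to the two children indices $2j-1$ and $2j$, obtaining $I(S_N^{(2j-1)}) = r\,I(W_N^{(2j-1)})$ and $I(S_N^{(2j)}) = r\,I(W_N^{(2j)})$, and likewise for the Bhattacharyya parameters. Next I would substitute the BEC recursions into the right-hand sides: $I(W_N^{(2j-1)}) = \big(I(W_{N/2}^{(j)})\big)^2$ and $I(W_N^{(2j)}) = 2I(W_{N/2}^{(j)}) - \big(I(W_{N/2}^{(j)})\big)^2$. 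Finally I would eliminate the BEC quantity in favour of the SEC quantity by using \emph{Lemma} \ref{lemma_capcaity_combined_polarized_channel} once more at block length $N/2$, namely $I(W_{N/2}^{(j)}) = \tfrac{1}{r} I(S_{N/2}^{(j)})$. Plugging this in gives
\begin{equation*}
  I\big(S_N^{(2j-1)}\big) = r \Big(\tfrac{1}{r} I\big(S_{N/2}^{(j)}\big)\Big)^2 = \tfrac{1}{r} I\big(S_{N/2}^{(j)}\big)^2,
\end{equation*}
and an identical manipulation yields $I(S_N^{(2j)}) = 2 I(S_{N/2}^{(j)}) - \tfrac{1}{r} I(S_{N/2}^{(j)})^2$, which is exactly \eqref{SEC_info_capacity}. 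The Bhattacharyya recursion \eqref{SEC_Z_parameter} follows by the same three-step substitution, now using $Z(W_N^{(2j-1)}) = 2Z(W_{N/2}^{(j)}) - Z(W_{N/2}^{(j)})^2$, $Z(W_N^{(2j)}) = Z(W_{N/2}^{(j)})^2$, and $Z(S_N^{(j)}) = r\,Z(W_N^{(j)})$ from \eqref{eq_Z_parameter_combined_polarized_channel}.

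There is essentially no obstacle beyond bookkeeping: the content is entirely carried by \emph{Lemma} \ref{lemma_capcaity_combined_polarized_channel} (which lets us pass between $S_N^{(j)}$ and $W_N^{(j)}$ with the factor $r$) together with the well-known BEC polarization recursions. The only point that warrants a sentence of care is the induction's base case, i.e.\ checking that the $N=2$ identities $I(S_2^{(1)}) = \tfrac1r I(S)^2$ and $I(S_2^{(2)}) = 2I(S) - \tfrac1r I(S)^2$ hold; this is immediate from \eqref{Eq_S+_Equivalent}--\eqref{Eq_S-_Equivalent}, $I(S) = r(1-\epsilon)$ (\emph{Lemma} \ref{capacity_SEC}), and $I(W^{-}) = (1-\epsilon)^2$, $I(W^{+}) = 1-\epsilon^2$ for the BEC. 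One might also remark that the two formulas in \eqref{SEC_info_capacity} are consistent with \eqref{eq_capcaity_combined_polarized_channel} in that their sum equals $2 I(S_{N/2}^{(j)})$, mirroring the conservation $I(S_N^{(2j-1)}) + I(S_N^{(2j)}) = 2 I(S_{N/2}^{(j)})$ inherited from the BEC identity $\delta_N^{(2j-1)} + \delta_N^{(2j)}$-type balance; this serves as a useful internal sanity check rather than part of the proof proper.
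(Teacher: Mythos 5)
Your proposal is correct and follows essentially the same route as the paper's proof: invoke the standard BEC recursions for $I(W_N^{(j)})$ and $Z(W_N^{(j)})$, apply \emph{Lemma} \ref{lemma_capcaity_combined_polarized_channel} at block length $N$ to pass to the SEC quantities, and then use the same lemma at block length $N/2$ to rewrite $I(W_{N/2}^{(j)}) = \tfrac{1}{r}I(S_{N/2}^{(j)})$, which produces the $\tfrac{1}{r}$ factors. The added base-case check and the conservation sanity check are harmless extras beyond what the paper records.
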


The proof of \emph{Theorem} \ref{theorem_SEC_capacity_and_Z_parameter} is given in Appendix C. The above capacity parameters $I(S_N^{(j)})$ and Bhattacharyya parameters $Z(S_N^{(j)})$ are two metrics of the rate and reliability (with respect to bits (packet elements) ) of SEC synthetic channels $S_N^{(j)}$, $1\leqslant j\leqslant N$. Subsequently, we will investigate the polarization phenomenon of SEC synthetic channels.

In the case $r=1$, the SEC synthetic channels degenerate into the BEC synthetic channels and are denoted by $W^{(j)}_{N}, 1\leqslant j\leqslant N$, $N=2^n$. In \cite{polarcode_Arikan}, it is proved that as $N$ increases, the synthetic channels $W^{(j)}_{N}$ become either almost perfect or almost completely noisy (polarize). It means that, in formal terms, for any $\gamma > 0$, the following formula holds
\begin{equation}
   \lim_{n\rightarrow \infty} {\frac{|t\in \{ +,- \}^{n}: I\Big{(}W_N^{(t)}\Big{)} \in (\gamma,1-\gamma)|}{2^n}} = 0. \label{eq_bit_polarization_effects}
\end{equation}

When $r>1$, for the SEC synthetic channels, there is a similar polarization phenomenon that is described as \emph{Theorem} $2$.

\begin{theorem}\label{theorem_SEC_polarization_effect}
As $N$ increases, the channels $S^{(j)}_{N}$ become either almost perfect or almost completely noisy in the symmetric SECs. That is, for any $\gamma > 0$,
\begin{equation}
   \lim_{n\rightarrow \infty} {\frac{|t\in \{ +,- \}^{n}: I\Big{(}S_N^{(t)}\Big{)} \in (\gamma,1-\gamma)|}{2^n}} = 0. 
\end{equation}
\end{theorem}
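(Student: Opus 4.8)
The plan is to deduce the statement directly from Ar\i kan's polarization theorem for the BEC---recorded here as~(\ref{eq_bit_polarization_effects})---by means of the exact scaling identity $I(S_N^{(j)})=r\,I(W_N^{(j)})$ furnished by \emph{Lemma}~\ref{lemma_capcaity_combined_polarized_channel}. Intuitively, \emph{Lemma}~\ref{lemma_capcaity_combined_polarized_channel} says that, as far as capacity is concerned, the SEC synthetic channel $S_N^{(j)}$ is just an $r$-fold copy of the BEC synthetic channel $W_N^{(j)}$; hence any statement about the proportion of ``gray'' (neither good nor bad) indices of $\{S_N^{(j)}\}_{j}$ transfers, after rescaling the threshold, to the corresponding statement for $\{W_N^{(j)}\}_{j}$, which is already known.

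First I would fix $\gamma>0$ and use \emph{Lemma}~\ref{lemma_capcaity_combined_polarized_channel} to rewrite the event $I(S_N^{(t)})\in(\gamma,1-\gamma)$ as $I(W_N^{(t)})\in(\gamma/r,(1-\gamma)/r)$. Since $r\geqslant 1$ one has $(1-\gamma)/r\leqslant 1-\gamma/r$, so
\begin{equation*}
\big\{t\in\{+,-\}^{n}:I(S_N^{(t)})\in(\gamma,1-\gamma)\big\}
\ \subseteq\
\big\{t\in\{+,-\}^{n}:I(W_N^{(t)})\in(\gamma',1-\gamma')\big\},
\quad \gamma'\triangleq \gamma/r>0.
\end{equation*}
Dividing by $2^{n}$ and letting $n\to\infty$, the right-hand proportion vanishes by~(\ref{eq_bit_polarization_effects}) invoked with threshold $\gamma'$, and the theorem follows. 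If one also wants the quantitative split (which fraction of the $S_N^{(j)}$ become noiseless), then, using in addition the standard fact that a fraction tending to $I(W)=1-\epsilon$ of the BEC synthetic channels satisfy $I(W_N^{(t)})>1-\gamma'$, the same substitution shows that a fraction tending to $1-\epsilon$ of the $S_N^{(j)}$ have capacity approaching $r$, consistently with $I(S)=r(1-\epsilon)$.

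A self-contained alternative, useful if one prefers not to cite~(\ref{eq_bit_polarization_effects}), is to run the classical martingale argument on the SEC parameters themselves using \emph{Theorem}~\ref{theorem_SEC_capacity_and_Z_parameter}. Let $B_1,B_2,\dots$ be i.i.d.\ uniform on $\{+,-\}$, let $J_n$ be the index selected by the prefix $(B_1,\dots,B_n)$, and put $I_n\triangleq I(S_{2^{n}}^{(J_n)})$. From~(\ref{SEC_info_capacity}) one checks the averaging identity $\tfrac12\big(I(S_N^{(2j-1)})+I(S_N^{(2j)})\big)=I(S_{N/2}^{(j)})$, so $\{I_n\}$ is a martingale taking values in $[0,r]$; by the martingale convergence theorem it converges a.s.\ and in $L^{1}$. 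Because the two children of a node carrying value $x$ differ by exactly $\tfrac{2}{r}x(r-x)$, the conditional one-step variance is $\tfrac1{r^{2}}I_n^{2}(r-I_n)^{2}$, whose expectation must tend to $0$ by orthogonality of martingale increments; hence $I_n(r-I_n)\to 0$ in probability, forcing the a.s.\ limit $I_\infty$ to lie in $\{0,r\}$. Since, for $0<\gamma<\tfrac12$, the interval $(\gamma,1-\gamma)$ has closure disjoint from $\{0,r\}$ (it sits strictly inside $(0,r)$ as $r\geqslant 1$), bounded convergence gives $\Pr\{I_n\in(\gamma,1-\gamma)\}\to 0$; this probability is precisely the proportion in the statement, and for $\gamma\geqslant\tfrac12$ the interval is empty and there is nothing to prove.

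The point I expect to require care is not analytic but bookkeeping: the scaling factor $r$ must be tracked faithfully when translating the threshold $\gamma$ (and, in the quantitative version, one must remember that the good-channel fraction is $1-\epsilon$, not $1$). The genuine substance---that the gray region depletes---is entirely inherited from the $r=1$ (BEC) case through \emph{Lemma}~\ref{lemma_capcaity_combined_polarized_channel}; had that lemma not been available, the crux would instead have been the product-form decomposition~(\ref{Eq_relation_S_and_W}), i.e.\ the fact that the packet-oriented operation $\boxplus$ does not couple the $r$ bit-planes, which is exactly what Propositions~\ref{propo_1}--\ref{propo_2} and \emph{Lemma}~\ref{lemma_SEC_product_compound_channel} establish.
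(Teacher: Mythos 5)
Your first argument is essentially the paper's own proof: both reduce the claim to Ar\i kan's BEC polarization via the scaling identity $I(S_N^{(t)})=r\,I(W_N^{(t)})$ from \emph{Lemma}~\ref{lemma_capcaity_combined_polarized_channel}; in fact you handle the bookkeeping more carefully, since the paper's proof silently replaces the stated interval $(\gamma,1-\gamma)$ by $(\gamma,r-\gamma)$, whereas your containment $\{I(W_N^{(t)})\in(\gamma/r,(1-\gamma)/r)\}\subseteq\{I(W_N^{(t)})\in(\gamma',1-\gamma')\}$ covers the theorem exactly as written. The self-contained martingale alternative built on \emph{Theorem}~\ref{theorem_SEC_capacity_and_Z_parameter} is also correct (the averaging identity and the one-step variance $\tfrac1{r^{2}}I_n^{2}(r-I_n)^{2}$ check out) but is not needed and is not the route the paper takes.
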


\begin{proof}
From \emph{Lemma} \ref{lemma_capcaity_combined_polarized_channel}, each capacity of synthetic channels $S_N^{(t)}$ can be computed as $I\Big{(}S_N^{(t)}\Big{)} = r I\Big{(}W_N^{(t)}\Big{)}$, $1\leqslant t\leqslant N$, $N=2^n$. For any $\gamma > 0$, and let $v=\gamma/r$, we have
\begin{equation*}
\begin{aligned}
  &\lim_{n\rightarrow \infty} {\frac{|t\in \{ +,- \}^{n}: I\Big{(}S_N^{(t)}\Big{)} \in (\gamma,r-\gamma)|}{2^n}} \\
= &\lim_{n\rightarrow \infty} {\frac{|t\in \{ +,- \}^{n}: r I\Big{(}W_N^{(t)}\Big{)} \in (\gamma,r-\gamma)|}{2^n}}\\
= &\lim_{n\rightarrow \infty} {\frac{|t\in \{ +,- \}^{n}: I\Big{(}W_N^{(t)}\Big{)} \in (\gamma/r,1-\gamma/r)|}{2^n}}\\
= &\lim_{n\rightarrow \infty} {\frac{|t\in \{ +,- \}^{n}: I\Big{(}W_N^{(t)}\Big{)} \in (v,1-v)|}{2^n}}\\
= &~0.
\end{aligned}
\end{equation*}
The proof of \emph{Theorem} \ref{theorem_SEC_polarization_effect} completes.
\end{proof}

From \emph{Theorem} \ref{theorem_SEC_polarization_effect}, as the $N$ increases, the capacity of some channels $S_N^{(i)}, i\in \mathcal{I}$, tend to $r$ (bits/channel use), and that of the rest channels tend to $0$, which is the polarization phenomenon of SEC synthetic channels $S_N^{(i)}$. In the proposed PSA schemes, we use the index set $\mathcal{I}$ as the information packet position index set that indicates which row vectors of the matrix ${\mathbf{F}}_2^{ \otimes n}$ are selected into the SP set.

\section{Proposed PSA schemes over SECs}\label{section_Proposed_PSA_schemes_over_SECs}

In this section, the proposed PSA schemes included two SPA methods and the pSC/SCL decoding algorithms are investigated in detail. Finally, the finite-slots non-asymptotic throughput bounds and the asymptotic throughput for the PSA scheme using the pSC decoding are analyzed.

\subsection{SP Assignment Methods}

With the conclusion from section \ref{section Polarization Transformation for SECs based on Packet-Oriented Operation} and guided by the polar encoding, there are two SPA methods for the PSA scheme. One is the SPA method with a variable SEP (SPA-v) and another is the SPA method with a fixed SEP value (SPA-f). The detailed procedure of the SPA-v algorithm and the SPA-f algorithm are described in Algorithm \ref{alg:SPA-v-algorithm} and Algorithm \ref{alg:SPA-f-algorithm}.

\begin{algorithm}
\caption{SPA-v algorithm for the PSA scheme}
\label{alg:SPA-v-algorithm}
\begin{algorithmic}
\REQUIRE{$U_1^{M}$: an ordered information packet sequence\;
~~~~~~~~~~~$\epsilon$: the variable SEP. } 
\ENSURE{SP vectors: $V_1,V_2,...,V_M $\;
~~~~~~~~$c_1^{N}$: capacity-ordered index sequence\;
~~~~~~~~~~$\mathcal{I}$: the information packet index set.}       
\STATE $\mathbf{Step~1}$: (online) Using the SEP $\epsilon$, the capacity metric $I\Big{(}S_N^{(i)}\Big{)}$ of synthetic channels are online calculated by using Eq. (\ref{SEC_info_capacity}), $1 \leqslant i \leqslant N$\;
\STATE $\mathbf{Step~2}$: (online) Sorting the values $I\Big{(}S_N^{(i)}\Big{)}$, $1$$ \leqslant i \leqslant $$N$, then the capacity-ordered index sequence $c_1^N$ is obtained. That is, the relationship $I\Big{(}S_N^{(c_N)}\Big{)} \leqslant \cdots \leqslant I\Big{(}S_N^{(c_1)}\Big{)}$ holds. In each active user and the BS, the indices $c_1$, $c_2$, $...$ and $c_M$ are selected into information packet position index set $\rm{{\cal I}}$.
\STATE $\mathbf{Step~3}$: SP assignment for the information packets $U_1^M$: the user $M$ selects the $c_1$th row-vector of ${\mathbf{F}}_2^{ \otimes n}$ as $V_M$ for the information packet $U_M$, ..., and the user $1$ selects the $c_M$th row-vector of ${\mathbf{F}}_2^{ \otimes n}$ as $V_1$ for the packet $U_1$.

\STATE $\mathbf{return}$: $V_1,V_2,...,V_M$, $c_1^N$ and $\mathcal{I}$.
\end{algorithmic}
\end{algorithm}

The first step of the SPA-v algorithm is online recursive computing the capacity metric of synthetic channels $I(S_N^{(i)})$ by using the Eq. (\ref{SEC_info_capacity}) with the initial values $I\Big{(}(S)_1^{(1)}\Big{)} = r(1-\epsilon)$, $1\leqslant i\leqslant N$.

The second step of the SPA-v algorithm is sorting the capacity metric of each synthetic channel $I\Big{(}S_N^{(i)}\Big{)}$, $1\leqslant i\leqslant N$. That is, the capacity-ordered index sequence $c_1^{N}$ of the synthetic channels is obtained, such that $I\Big{(}S_N^{(c_N)}\Big{)} \leqslant I\Big{(}S_N^{(c_{N-1})}\Big{)} \cdots \leqslant I\Big{(}S_N^{(c_1)}\Big{)}$. And then, with the number $M$ of the active users which is obtained before transmitting their information packets, the $M$ indices of the bigger values in the synthetic channels capacity metric sequence, $c_1,c_2,...,c_M$, are selected to constitute the information packet index set $\rm{{\cal I}}$ in each active user and the BS.

In the third step of the SPA-v algorithm, SPs are assigned to each active user. Under the assumption A.$4$, the user $i$ selects the $c_{(M-i+1)}$th row of the matrix ${\mathbf{F}}_2^{ \otimes n}$ as its SP for their information packet $U_i$, $1\leqslant i\leqslant M$.

$\mathbf{Example~~1:}$
As shown in Fig. \ref{fig.1}, there are $M=4$ active users who want to transmit information packets $U_1^4$ to the BS by through the slotted ALOHA scheme which includes $N=8$ slots in each slot-frame. With a SEP value $\epsilon = 0.5$, following by the computing and sorting steps, the capacity-ordered index sequence $c_1^8=(8,7,6,4,5,3,2,1)$ is obtained. So, the index set of information packets is $\mathcal{I}=\{8,7,6,4\}$. That is to say, the constructed SP set includes the $8$th, $7$th, $6$th and $4$th row of the ${\mathbf{F}}_2^{ \otimes 4}$. Finally, the user $4$ select the $8$th row of ${\mathbf{F}}_2^{ \otimes 4}$ with the biggest value capacity as its SP $V_1=(1,1,1,1,1,1,1,1)$, ..., and user $1$ select the $4$th row as its SP $V_4 = (1,1,1,1,0,0,0,0)$. The matrix ${\mathbf{F}}_2^{ \otimes 4}$ is shown as
\begin{equation*}
  {\mathbf{F}}_2^{ \otimes 4}=\left[
   \begin{array}{cccccccc}
   1 & 0 & 0 & 0 & 0 & 0 & 0 & 0 \\
   1 & 1 & 0 & 0 & 0 & 0 & 0 & 0 \\
   1 & 0 & 1 & 0 & 0 & 0 & 0 & 0 \\
   1 & 1 & 1 & 1 & 0 & 0 & 0 & 0 \\
   1 & 0 & 0 & 0 & 1 & 0 & 0 & 0 \\
   1 & 1 & 0 & 0 & 1 & 1 & 0 & 0 \\
   1 & 0 & 1 & 0 & 1 & 0 & 1 & 0 \\
   1 & 1 & 1 & 1 & 1 & 1 & 1 & 1 \\
   \end{array}
\right].
\end{equation*}

\begin{algorithm}
\caption{SPA-f algorithm for the PSA scheme}
\label{alg:SPA-f-algorithm}
\begin{algorithmic}
\REQUIRE{$U_1^{M}$: an ordered information packet sequence;}
\ENSURE{SP vectors: $V_1,V_2,...,V_M $\;
~~~~~~~~$c_1^{N}$: capacity-ordered index sequence\;
~~~~~~~~~~$\mathcal{I}$: the information packet index set.}       
\STATE $\mathbf{Step~1}$: (offline) The capacity metric $I(S_N^{(i)})$ of synthetic channels were offline computed by using Eq. (\ref{SEC_info_capacity}) with a fixed SEP value, $1 \leqslant i \leqslant N$. And then, the sequence $c_1^N$ was obtained by sorting the values $I\Big{(}S_N^{(i)}\Big{)}$, $1 \leqslant i \leqslant N$. That is, the relationship $I\Big{(}S_N^{(c_N)}\Big{)} \leqslant \cdots$$\leqslant I\Big{(}S_N^{(c_1)}\Big{)}$ holds. In each active user and the BS, the indices $c_1$, $c_2$, $...$ and $c_M$ are selected into information packet position index set $\rm{{\cal I}}$.
\STATE $\mathbf{Step~2}$: The sequence $c_1^N$ was pro-stored into a look-up table and equipped in each user and the BS\;
\STATE $\mathbf{Step~3}$: SP assignment for the information packets $U_1^M$: the user $M$ selects the $c_1$th row-vector of ${\mathbf{F}}_2^{ \otimes n}$ as $V_M$ for the information packet $U_M$, ..., and the user $1$ selects the $c_M$th row-vector of ${\mathbf{F}}_2^{ \otimes n}$ as $V_1$ for the packet $U_1$.

\STATE $\mathbf{return}$: $V_1,V_2,...,V_M $, $c_1^N$ and $\mathcal{I}$.
\end{algorithmic}
\end{algorithm}

In the SPA-v algorithm, the SP set for each slot-frame is online constructed by using the variable SEP value of SEC in each active user and the BS simultaneously. For reducing the user's computational complexity, in the SPA-f algorithm, a fixed capacity-ordered SP sequence is pre-stored as a lookup table and was equipped in each user and the BS. The SP sequence is offline constructed by using a fixed SEP value (It is emphasized that any value which about statistics of $\epsilon$ was known is allowed used). Compared to the SPA-v algorithm, the throughput of the PSA scheme with the SPA-f algorithm suffers from a certain throughput loss as its SP set construction method uses a fixed SEP value regardless of the variable SEP value of SECs.

Mathematically, an equivalent source packet sequence $U_1^N$ is obtained after the SP vectors are allocated. Without misunderstanding, for $1\leqslant i\leqslant N$,
\begin{equation} \label{eq_mixed_equivalent source_N}
  U_i = \left\{
  \begin{aligned}
        ~U_{j}~~~~& \textrm{if $i \in \mathcal{I}$ and $c_{(M-j+1)}=i$}\\
        \mathbf{0}~~~~& \textrm{if $i \not \in \mathcal{I}$ }
  \end{aligned}
  \right.
\end{equation}
where $c_1^N$ is the capacity-ordered index sequence. It should be noted that the equivalent source packet sequence is obtained by mixing the information packet sequence $U_1^M$ and $(N-M)$ all-zero packets. Just like the SC/SCL decoding for polar codes, in the BS, we can use the prior information about the ($N-M$) all-zero packets to aid for the decoding of the estimated source packet sequence $\hat{U}_1^N$ by using the pSC/PSCL decoding.

\subsection{Decoding Algorithms}


Before describing the pSC decoding algorithms, we define an operation $\star$ and an indicator function $f(\cdot)$ about the packets $Y \in \{0,1\}^r \cap \{E\}$ which will be used in the pSC/SCL decoding. The packet-oriented operation $\star$ is defined as:
\begin{equation}\label{eq_p_operation_with_erasure_packet}
  Y_1 \star Y_2 =
  \left\{
    \begin{aligned}
              &Y_1 {\boxplus} Y_2 ~~~\text{if}~Y_1,Y_2 \in \{0,1\}^r\\
              &E               ~~~~~~~~~\text{if}~Y_1\in\{E\}~\text{or}~Y_2\in\{E\}
\end{aligned} \right.
\end{equation}
That is, for any two packets without erasure packet, the operation $\star $ is equivalent to the operation ${\boxplus}$ which is defined as Eq. (\ref{eq_def_packet_oriented_operation}). Otherwise, if anyone of inputs is an erasure packet, the output packet of the operation $\star$ is $E$. And let $f(v)$ denote the indicator of the packet-component being or not an erasure $e$, that is,
\begin{equation}\label{Eq_indicator_func}
  f(v) \triangleq  \left\{
    \begin{aligned}
              &1~~~~~~~~~~\text{if}~~v \neq e \\
              &0~~~~~~~~~~\text{if}~~v = e
    \end{aligned} \right.
\end{equation}

Guided by the decoding based on the multi-layer graphical representation of polar codes \cite{polarcode_Arikan} \cite{polar_decoding_multi_layer:2013}, the unform graph is used for the pSC decoding of the PSA scheme. For a PSA scheme with $N=2^n$ slots, there are $N$ rows and $n+1$ columns in the associated graph. For each $1 \leqslant i \leqslant N$ and $0 \leqslant j \leqslant n$, the node in the $i$th row and the $j$th column is associated with two variables: a posterior variable $Q_{i,j}$ and an estimated variable $\hat{U}_{i,j}$. The right-most posterior variables ($Q_{i,n}: i\in\{1,...,N\}$) are the received from the slot erasure channel and constitute the pSC decoding input.

The remaining posterior values are recursively calculated as \cite{polar_DE_mori:2009}\cite{polarcode_BEC_decoding:2018}:
\begin{equation} \label{eq_decoding_posterior_values}
  Q_{i,j} = \left\{
    \begin{aligned}
      Q_{2i-1,j+1} \star Q_{2i,j+1}~~~~~~~&~\text{if}~i\leqslant {N/2}  \\
      g\big{(}Q_{2i-1,j+1}, Q_{2i, j+1},\hat{U}_{i-N/2,j}\big{)} &~\text{if}~i > {N/2}
\end{aligned} \right.
\end{equation}
where the function $g$ relies on the estimated packet value of $(\hat{U}_{i-N/2,j})$. Using the method as shown in \cite{polarcode_BEC_decoding:2018}, each element for the output packet of $g$ can be computed as follows.

\begin{figure*}[ht]
\begin{equation}\label{eq_decoding_posterior metric_g_0}
     Q_{i,j}[w] = \left\{
          \begin{aligned}
              &e,~~~~~~~~~~~~~~~~~~~~~~~~~~~~~~~~~\text{if}~f\Big{(}Q_{2i-1,j+1}[w]\Big{)}=f\Big{(}Q_{2i,j+1}[w]\Big{)} = 0~\text{or}~Q_{2i-1,j+1}[w] = \overline{{Q}_{2i,j+1}[w]}\\
              &f\Big{(}Q_{2i-1,j+1}[w]\Big{)}Q_{2i-1,j+1}[w]+f\Big{(}Q_{2i,j+1}[w]\Big{)}Q_{2i,j+1}[w],~\text{if}~f\Big{(}Q_{2i-1,j+1}[w]\Big{)}=\overline{f\Big{(}Q_{2i,j+1}[w]\Big{)}}  \\
              &Q_{2i-1,j+1}[w] Q_{2i,j+1}[w]+\overline{\overline{Q_{2i-1,j+1}[w]}~\overline{Q_{2i,j+1}[w]} },~~~~~~~~~~~\text{if}~f\Big{(}Q_{2i-1,j+1}[w]\Big{)}=f\Big{(}Q_{2i,j+1}[w]\Big{)}=1
          \end{aligned} \right.
\end{equation}

\begin{equation} \label{eq_decoding_posterior metric_g_1}
  Q_{i,j}[w] = \left\{
    \begin{aligned}
              &e,~~~~~~~~~~~~~~~~~~~~~~~~~~~~~~~~~~~~~~~~~~~~~~~~~~~~~~~~~~~~~~~~~~~~~~~~~~~~\text{if}~Q_{2i-1,j+1}[w]=Q_{2i,j+1}[w] \\
              & \begin{aligned}
                    &\overline{Q_{2i-1,j+1}[w]\overline{Q_{2i, j+1}[w]}+\overline{f\Big{(}Q_{2i-1,j+1}[w]\Big{)}}~\overline{Q_{2i, j+1}[w]}} \\
                    &+ \overline{f\Big{(}Q_{2i-1,j+1}[w]\Big{)}}Q_{2i, j+1}[w] + \overline{Q_{2i-1,j+1}[w]}~\overline{f\Big{(}Q_{2i, j+1}[w]\Big{)}}
                    \end{aligned},
                  ~~~\text{if}~f\Big{(}Q_{2i-1,j+1}[w]\Big{)}=\overline{f\Big{(}Q_{2i, j+1}[w]\Big{)}}  \\
              &Q_{2i,j+1}[w],~~~~~~~~~~~~~~~~~~~~~~~~~~~~~~~~~~~~~~~~~~~~~~~~~~~~~~~~~~~~~~~~~\text{if}~Q_{2i-1,j+1}[w]=\overline{Q_{2i,j+1}[w]}
\end{aligned} \right.
\end{equation}

\hrulefill
\end{figure*}

For $1\leqslant w\leqslant r$, if the estimated value $\hat{U}_{i-N/2,j}[w] = 0$, then $Q_{i,j}[w]$ is calculated by using Eq. (\ref{eq_decoding_posterior metric_g_0}) and if $\hat{U}_{i-N/2,j}[w] = 1$, the $Q_{i,j}[w]$ is computed by using Eq. (\ref{eq_decoding_posterior metric_g_1}).

The estimated variables are calculated successively in accordance with the following rules.
\begin{equation} \label{eq_decoding_estimed_values}
  \hat{U}_{i,j} = \left\{
    \begin{aligned}
      &\mathbf{0}~~~~~~~~~~~~~~~~~~~~~~~~~~~~~\text{if}~j=0~\text{and}~i\notin{\mathcal{I}}; \\
      &Q_{i,j}~~~~~~~~~~~~~~~~~~~~~~~~~~\text{if}~j=0~\text{and}~i\in{\mathcal{I}};\\
      &\hat{U}_{i/2,j-1}\star \hat{U}_{i/2+N/2,j-1}~~~\text{if}~j\neq 0~\text{and}~i~\text{even};\\
      &\hat{U}_{(i+1)/2+N/2,j-1}~~~~~~~~~~~\text{if}~j\neq 0~\text{and}~i~\text{odd}.
\end{aligned} \right.
\end{equation}

We define a metric vector $K_{i,j}$ of the posterior packet $Q_{i,j}$ as
\begin{equation}\label{Eq_define_posterior_metric}
  K_{i,j} \triangleq f(Q_{i,j}) = \Big{(}f(Q_{i,j}[1]),~...~,~f(Q_{i,j}[r])\Big{)}
\end{equation}
which can be used to measure the past trajectory of the decoded path.

The pSCL decoding algorithm is required to preserve $L$ survival paths with the max path metric at each decision stage of each packet \cite{Polarcode_magazine:2014}\cite{SClist_decoding_Tal_Vardy:2015}. For $1 \leqslant i \leqslant N$ and $1\leqslant {\ell}\leqslant L$, the ${\ell}$th path metric vector of  estimated packet vector $\hat{U}_1^i$ is denoted as $M_{\ell}^{(i)} \triangleq \big{(}m_{\ell}[1], m_{\ell}[2],...,m_{\ell}[r]\big{)}$. For $1\leqslant w\leqslant r$, each element $m_{\ell}[w]^{(i)}$ can be recursively computed as \cite{LLR_SCL_Burg:2015}

\begin{equation*}
m_{\ell}[w]^{(i)} = m_{\ell}[w]^{(i-1)} + K_{i,0}[w].
\end{equation*}

At the end of pSCL decoding, the survival path with the maximum value from the $L$ path metric is selected as the decoding result.

\subsection{Throughput Analysis of the PSA}

In this section, an upper and a lower non-asymptotic throughput bounds and the asymptotic throughput for the PSA schemes using the pSC decoding are investigated.

\subsubsection{For the case of finite $N$}
In the finite $N$ slots case, the non-asymptotic throughput $T$ for PSA schemes using the pSC decoding is evaluated by an upper bound and a lower bound. First, we define the error events as:
\begin{equation}\label{Eq_block_error_event}
  \mathcal{E} \triangleq \Big{\{} (U_1^N,Y_1^N) \subset ({\mathcal{X}}^{r} \times {\mathcal{Y}}^r)^N: \hat{U}_\mathcal{I}\neq U_{\mathcal{I}}\Big{\}}
\end{equation}
where $\hat{U}_\mathcal{I}$ denoted the output of pSC decoding.

\begin{theorem}\label{theorem_throughput_upper_lower_bound_of_pSC}
An upper bound and a lower bound of the throughput $T$ for PSA schemes using the pSC decoding are
\begin{equation}\label{upper_and_lower_bound_of_pSC}
  \frac{M}{N} \bigg{(} 1-\sum_{i \in \mathcal{I}} \frac{Z\big{(}S_N^{(i)}\big{)}}{r} \bigg{)} \leqslant T \leqslant \frac{M}{N} \bigg{(} 1-\max_{i \in \mathcal{I}} \frac{Z\big{(}S_N^{(i)}\big{)}}{r} \bigg{)}.
\end{equation}
\end{theorem}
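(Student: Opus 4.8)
The plan is to bound the packet-recovery probability $P_u$ of the pSC decoder in terms of the Bhattacharyya parameters $Z(S_N^{(i)})$ of the information-carrying synthetic channels, and then invoke the definition $T = GP_u = (M/N)P_u$ from Eq.~(\ref{eq_throughput_efficiency}). The key structural fact I would use is \emph{Lemma}~\ref{lemma_capcaity_combined_polarized_channel} together with \emph{Remark}~\ref{remark_SEC_product_compound_channel}: each SEC synthetic channel $S_N^{(i)}$ factors as a product compound channel of $r$ identical BEC synthetic channels $W_N^{(i)}$, and moreover these $r$ component BECs share the \emph{same} erasure realization (they are erased together, slot by slot). Hence the pSC decoder, operating packet-by-packet, makes a decision error on $U_i$ for $i\in\mathcal{I}$ precisely when the underlying bit-level SC decision on $W_N^{(i)}$ would fail --- the per-element failure event of the $w$-th component does not depend on $w$ because all $r$ components see identical erasures. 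So the event $\hat U_i \neq U_i$ under genuine pSC (genie-aided) has probability equal to the erasure probability of $W_N^{(i)}$, which for a BEC synthetic channel is exactly $Z(W_N^{(i)})$ (for a BEC the Bhattacharyya parameter coincides with the erasure probability). By Eq.~(\ref{eq_Z_parameter_combined_polarized_channel}), $Z(W_N^{(i)}) = Z(S_N^{(i)})/r$.

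Next I would pass from genie-aided per-channel error to the actual block (all-users) error $\mathcal{E}$ of Eq.~(\ref{Eq_block_error_event}). For the lower bound on $T$, use the standard union bound over the information indices: $\Pr[\mathcal{E}] \leqslant \sum_{i\in\mathcal{I}} \Pr[\hat U_i \neq U_i \mid \hat U_1^{i-1} = U_1^{i-1}] = \sum_{i\in\mathcal{I}} Z(S_N^{(i)})/r$, so $P_u = 1 - \Pr[\mathcal{E}] \geqslant 1 - \sum_{i\in\mathcal{I}} Z(S_N^{(i)})/r$; multiplying by $M/N$ gives the left inequality. For the upper bound on $T$, observe that a block success requires every individual information packet to be recovered, so $\Pr[\mathcal{E}] \geqslant \max_{i\in\mathcal{I}} \Pr[\hat U_i \neq U_i]$; here I would argue that in the erasure setting the genie-aided and actual decoders agree on whether a given packet is recoverable (on a BEC the SC decoder never makes an undetected error --- it either recovers or declares erasure --- so past decisions fed back are always correct and the per-channel erasure events are the "true" ones), giving $\Pr[\mathcal{E}] \geqslant \max_{i\in\mathcal{I}} Z(S_N^{(i)})/r$ and hence $P_u \leqslant 1 - \max_{i\in\mathcal{I}} Z(S_N^{(i)})/r$; multiply by $M/N$ for the right inequality.

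The main obstacle I anticipate is making the reduction "pSC error on $U_i$ $\iff$ erasure of $W_N^{(i)}$" fully rigorous, i.e. justifying that (i) the packet-oriented $\star$ operation and the element-wise $g$-updates of Eqs.~(\ref{eq_decoding_posterior metric_g_0})--(\ref{eq_decoding_posterior metric_g_1}) really do implement, componentwise and synchronously, the BEC-SC recursion on each of the $r$ layers, and (ii) because all $r$ layers experience identical slot erasures, the "$w$-th component is erased" event is independent of $w$, so there is no averaging over components and the packet-level erasure probability equals the single-layer BEC erasure probability $Z(W_N^{(i)})$ rather than something like $1-(1-Z(W_N^{(i)}))^r$. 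Both points follow from \emph{Lemma}~\ref{lemma_SEC_product_compound_channel} / \emph{Remark}~\ref{remark_SEC_product_compound_channel} and the construction of the decoder, but spelling out the synchronous-erasure argument is the crux. The rest --- union bound for the lower bound, max bound for the upper bound, and the final multiplication by $G = M/N$ --- is routine.
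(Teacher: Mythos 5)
Your proposal is correct and follows essentially the same route as the paper: bound the block error probability $P(\mathcal{E})$ of pSC decoding above by $\sum_{i\in\mathcal{I}} Z(S_N^{(i)})/r$ (union bound) and below by $\max_{i\in\mathcal{I}} Z(S_N^{(i)})/r$, using the fact from \emph{Lemma}~\ref{lemma_SEC_product_compound_channel} that the $r$ component BECs share the identical erasure realization so that the per-packet reliability metric is $Z(W_N^{(i)})=Z(S_N^{(i)})/r$, and then substitute $P_u=1-P(\mathcal{E})$ into $T=(M/N)P_u$. In fact you spell out the synchronous-erasure crux more explicitly than the paper, which simply invokes Ar{\i}kan's bounds after noting the identical noise realization.
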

\begin{proof}
From \emph{Lemma} \ref{lemma_SEC_product_compound_channel}, the SEC $S$ guarantees each component channel $W$ suffers from the identical noise realization. And as mentioned in \cite{polarcode_Arikan}, the probability of error event $P(\mathcal{E})$ for the pSC decoding with an upper bound and a lower bound, we rewrite them as
\begin{equation}\label{Eq_p_SC_upper_bound}
  P(\mathcal{E}) \leqslant \sum_{i \in \mathcal{I}} \frac{Z\Big{(}S_N^{(i)}\Big{)}}{r}
\end{equation}
and
\begin{equation}\label{Eq_p_SC_lower_bound}
  P(\mathcal{E}) \geqslant \max_{i \in \mathcal{I}} \frac{Z\Big{(}S_N^{(i)}\Big{)}}{r}
\end{equation}
where the parameters $Z(S_N^{(i)})/r$ are reliability metric (with respect to packets) of synthetic channel $S_N^{(i)}$, $i\in \mathcal{I}$.

Recall that $P_u = 1-P(\mathcal{E})$, and replace Eq. (\ref{Eq_p_SC_upper_bound}) and (\ref{Eq_p_SC_lower_bound}) into Eq. (\ref{eq_throughput_efficiency}), the Eq. (\ref{upper_and_lower_bound_of_pSC}) of throughput $T$ holds. Therefore, the proof of this \emph{Theorem} completes.
\end{proof}

\begin{figure}[htbp]
\centerline{\includegraphics[scale=0.5]{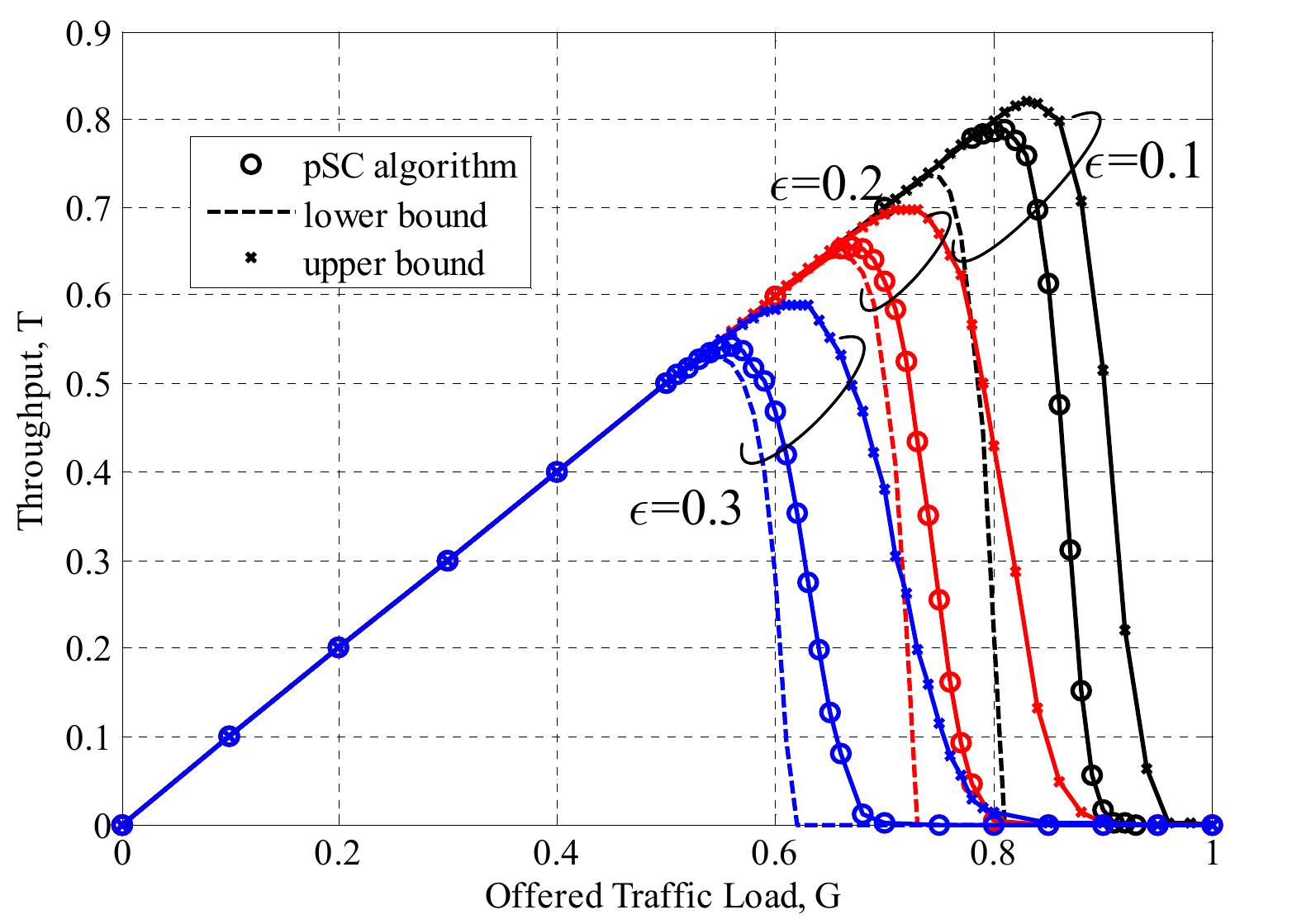}}
\caption{The upper bound and lower bound of the throughput versus traffic load for PSA schemes over the SECs using the pSC decoding under different slot erasure rates $\epsilon=0.1,0.2,0.3$ and $N=1024$.}
\label{SC_with_up_low_bound}
\end{figure}

The curves of the throughput for PSA schemes over SECs using the pSC algorithm and the upper/lower bounds under different SEP values are shown in Fig. (\ref{SC_with_up_low_bound}). When $\epsilon$ increases from $0.1$ to $0.3$, it can be seen that the upper bound is getting looser, but the lower bound is getting tighter. However, in general, the upper/lower bound is still relatively loose.

\subsubsection{For the case of infinite $N\rightarrow  \infty$}
In the infinite $N$ slots case, the asymptotic throughput $T$ for PSA schemes using the pSC decoding can be evaluated by \emph{Theorem} \ref{theorem_throughput_for_infinite slots case}.
\begin{theorem}\label{theorem_throughput_for_infinite slots case}
With the SPA-v algorithm, the asymptotic throughput of the PSA scheme over SECs is
\begin{equation}
  T_{a} = 1-\epsilon
\end{equation}

\end{theorem}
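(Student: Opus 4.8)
The plan is to show that the PSA scheme with the SPA-v algorithm inherits, through the equivalence established in \emph{Lemma} \ref{lemma_capcaity_combined_polarized_channel} and \emph{Theorem} \ref{theorem_SEC_polarization_effect}, the capacity-achieving behaviour of classical polar codes on the underlying BEC. First I would observe that, by the product-channel decomposition $S=\underbrace{W\times\cdots\times W}_{r}$ of Eq. (\ref{SEC}) and by Eq. (\ref{eq_capcaity_combined_polarized_channel}), every synthetic channel satisfies $I(S_N^{(j)})=rI(W_N^{(j)})$, where $W_N^{(j)}$ are exactly Ar\i kan's BEC synthetic channels for a BEC of erasure probability $\epsilon$. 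Hence the SPA-v ordering of the $I(S_N^{(j)})$ coincides with the ordering of the $I(W_N^{(j)})$, and choosing the $M$ largest of them is precisely the polar code construction for the BEC $W(\epsilon)$ at blocklength $N$ and dimension $M$.

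The second step is to fix the operating point: take $M/N=G\to 1-\epsilon$ as $N=2^{n}\to\infty$, i.e. the rate $R=M/N$ approaches $I(W)=1-\epsilon$ from below. I would then invoke \emph{Theorem} \ref{theorem_SEC_polarization_effect} (equivalently the classical BEC polarization statement, Eq. (\ref{eq_bit_polarization_effects})): for any $\gamma>0$ the fraction of indices with $I(W_N^{(j)})\in(\gamma,1-\gamma)$ vanishes, so the fraction of ``good'' indices (those with $I(W_N^{(j)})>1-\gamma$, equivalently $Z(W_N^{(j)})$ small by the BEC identity $I+Z=1$) tends to $I(W)=1-\epsilon$. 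Consequently, for $n$ large enough the information packet index set $\mathcal I=\{c_1,\dots,c_M\}$ selected by SPA-v consists entirely of indices with $Z(S_N^{(i)})/r=Z(W_N^{(i)})\to 0$; more precisely $\sum_{i\in\mathcal I}Z(W_N^{(i)})\to 0$, using the stronger rate-of-polarization fact that the sum of the Bhattacharyya parameters over the best $N(I(W)-\delta)$ indices goes to $0$. Plugging this into the upper-bound/lower-bound sandwich of \emph{Theorem} \ref{theorem_throughput_upper_lower_bound_of_pSC},
\[
\frac{M}{N}\Big(1-\sum_{i\in\mathcal I}\tfrac{Z(S_N^{(i)})}{r}\Big)\le T\le \frac{M}{N}\Big(1-\max_{i\in\mathcal I}\tfrac{Z(S_N^{(i)})}{r}\Big),
\]
both sides converge to $(1-\epsilon)\cdot 1=1-\epsilon$, which gives $T_a=1-\epsilon$.

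The final step is a brief justification that the error bound (\ref{Eq_p_SC_upper_bound}) used inside \emph{Theorem} \ref{theorem_throughput_upper_lower_bound_of_pSC} is the right packet-level quantity: because the SEC forces all $r$ component BECs to share the same erasure realization, a packet-level pSC decoding failure on synthetic channel $S_N^{(i)}$ occurs exactly when the common BEC realization causes a bit-level SC failure on $W_N^{(i)}$, whose probability is at most $Z(W_N^{(i)})=Z(S_N^{(i)})/r$; a union bound over $i\in\mathcal I$ then yields $P(\mathcal E)\le\sum_{i\in\mathcal I}Z(S_N^{(i)})/r\to 0$, hence $P_u\to 1$ and $T=GP_u\to(1-\epsilon)$. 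The main obstacle I anticipate is not the algebra but making the limit $M/N\to 1-\epsilon$ \emph{consistent} with $\sum_{i\in\mathcal I}Z(W_N^{(i)})\to 0$: this requires the quantitative (not merely qualitative) form of BEC polarization — that one may simultaneously pick a fraction of indices approaching $I(W)$ while the aggregate Bhattacharyya mass on them vanishes — so I would either cite Ar\i kan's rate-of-polarization result directly or supply the short BEC-specific argument (the $Z$-recursions $Z^{-}=2Z-Z^2$, $Z^{+}=Z^2$ of Eq. (\ref{SEC_Z_parameter}) with $r=1$ make the sum $\sum_j Z(W_N^{(j)})\mathbf 1[Z(W_N^{(j)})<\eta]$ controllable) to close the gap rigorously.
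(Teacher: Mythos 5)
Your proposal is correct, and its high-level idea coincides with the paper's: polarization of the synthetic channels $S_N^{(j)}$ makes the $M$ selected channels asymptotically perfect, so $P_u\to 1$ and $T=GP_u\to 1-\epsilon$. The execution differs in a useful way, however. The paper's own proof is a two-sentence qualitative argument: it cites \emph{Theorem}~\ref{theorem_SEC_polarization_effect} to assert that the chosen synthetic channels become ``almost perfect,'' concludes $P_u\to 1$ for any fixed $G<1-\epsilon$, and declares $T_a=1-\epsilon$; it never quantifies the error probability or explains why a vanishing fraction of ``intermediate'' channels suffices when $G$ approaches the capacity. You instead route the argument through the finite-$N$ sandwich of \emph{Theorem}~\ref{theorem_throughput_upper_lower_bound_of_pSC} and show that $\sum_{i\in\mathcal I}Z(S_N^{(i)})/r\to 0$ while $M/N\to 1-\epsilon$, correctly flagging that this requires the quantitative rate-of-polarization result (or the BEC-specific control of the $Z$-recursions), not merely the qualitative statement in Eq.~(\ref{eq_bit_polarization_effects}). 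That extra step is precisely what the paper's proof leaves implicit: polarization alone says the fraction of intermediate indices vanishes, but the union bound $P(\mathcal E)\le\sum_{i\in\mathcal I}Z(W_N^{(i)})$ involves up to $N(1-\epsilon)$ terms, so one genuinely needs the aggregate Bhattacharyya mass on the selected set to vanish. Your version is therefore the more rigorous one; the paper's buys brevity at the cost of hiding this gap. The only caveat is that you should state explicitly the order of limits you adopt (either fix $G<1-\epsilon$ and let $N\to\infty$, then take $G\uparrow 1-\epsilon$, as the paper implicitly does, or cite the Ar{\i}kan--Telatar scaling to handle $M/N\to 1-\epsilon$ directly); either choice closes the argument.
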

\begin{proof}
In the proposed PSA schemes with the SPA-v algorithm using the variable SEP, the active users select synthetic channels with higher capacity as their SPs. From \emph{Theorem} \ref{theorem_SEC_polarization_effect}, as the number $N$ of slots goes to infinity through powers of two, the synthetic channels which as SPs are almost perfect. Corresponding, the asymptotic recovery probability of user packets in the BS will achieve the capacity of the slot erasure channel, that is $P_u = 1-\epsilon$.

Accordingly, when the offered traffic load $G=M/N<(1-\epsilon)$, as the number $N$ goes to infinity, the SC decoding will correctly recovery the packets transmitted from the active users in the PSA scheme. Consequently, we make a conclusion that the asymptotic throughput of the proposed PSA scheme is $T_{a} = 1-\epsilon$.
\end{proof}

The simulation results for the PSA scheme over SECs will be shown in the next section.

\section{Simulation Results}\label{section_simulation_results}

In this section, we will evaluate the throughput of the PSA schemes over SECs with different parameters.

\begin{figure}[htbp]
\centerline{\includegraphics[scale=0.5]{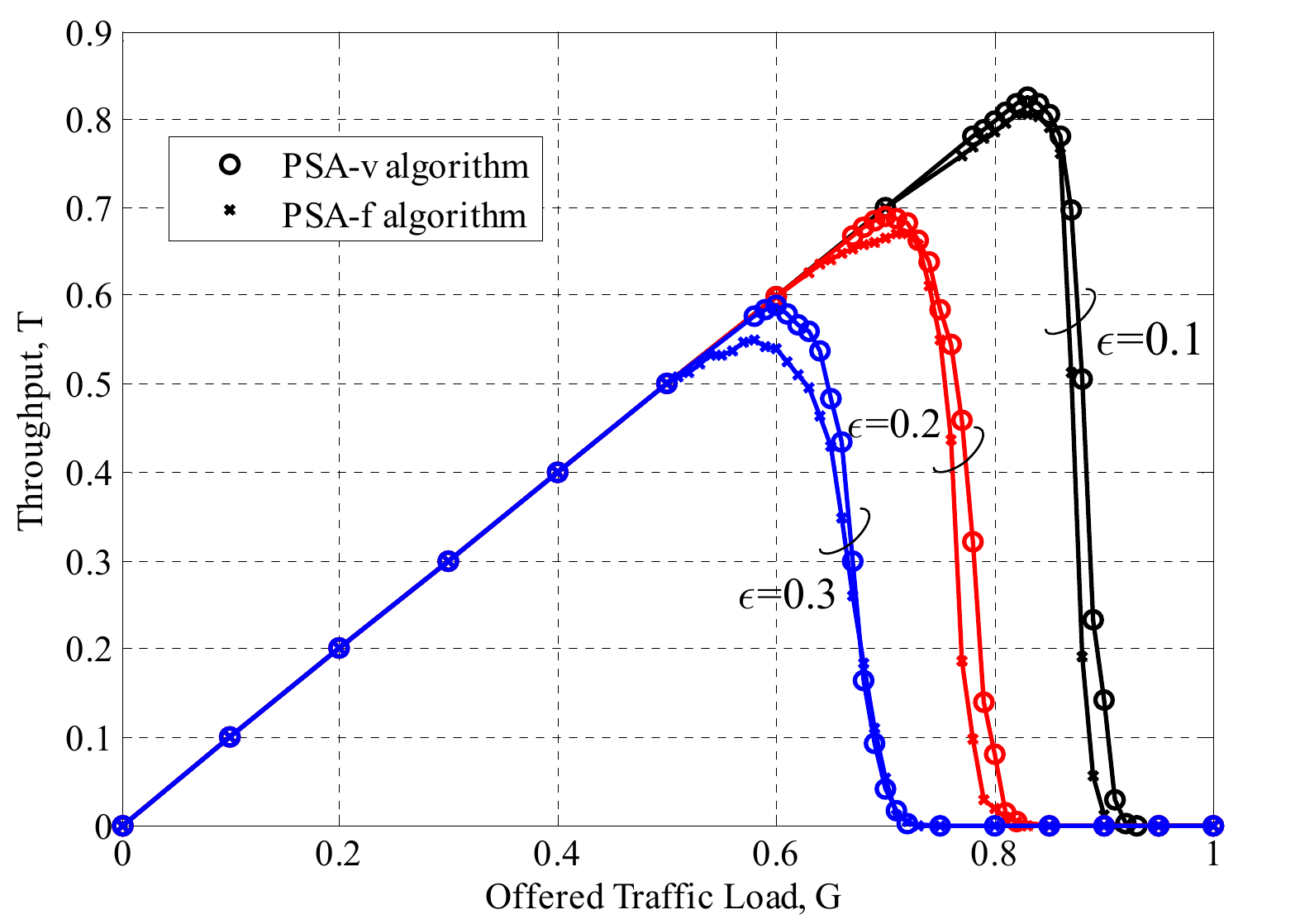}}
\caption{For the different SPA methods, the throughput $T$ versus offered traffic load $G$ for the PSA schemes transmit over SECs with SEP values $\epsilon=0.1,0.2,0.3$, and $N=1024$ using the pSCL ($L=16$) decoding.}
\label{Throughout_PSA_N_1024_e_0_1_0_2_0_3_SPA_v_SPA_f_algorithm}
\end{figure}

Fig. \ref{Throughout_PSA_N_1024_e_0_1_0_2_0_3_SPA_v_SPA_f_algorithm} shows the throughput curves of the PSA schemes over SECs with the two SPA methods under the same pSCL ($L=16$) decoding and $N=1024$. Obviously, for different SEP values, it reads that the maximum throughput $T^{*}$ of PSA schemes with the SPA-v algorithm is always higher than that with the SPA-f algorithm. This observation validates the previous analysis in section \ref{section_Proposed_PSA_schemes_over_SECs}. Therefore, the SPA-v algorithm is used in the following performance evaluation.

\begin{figure}[htbp]
\centerline{\includegraphics[scale=0.5]{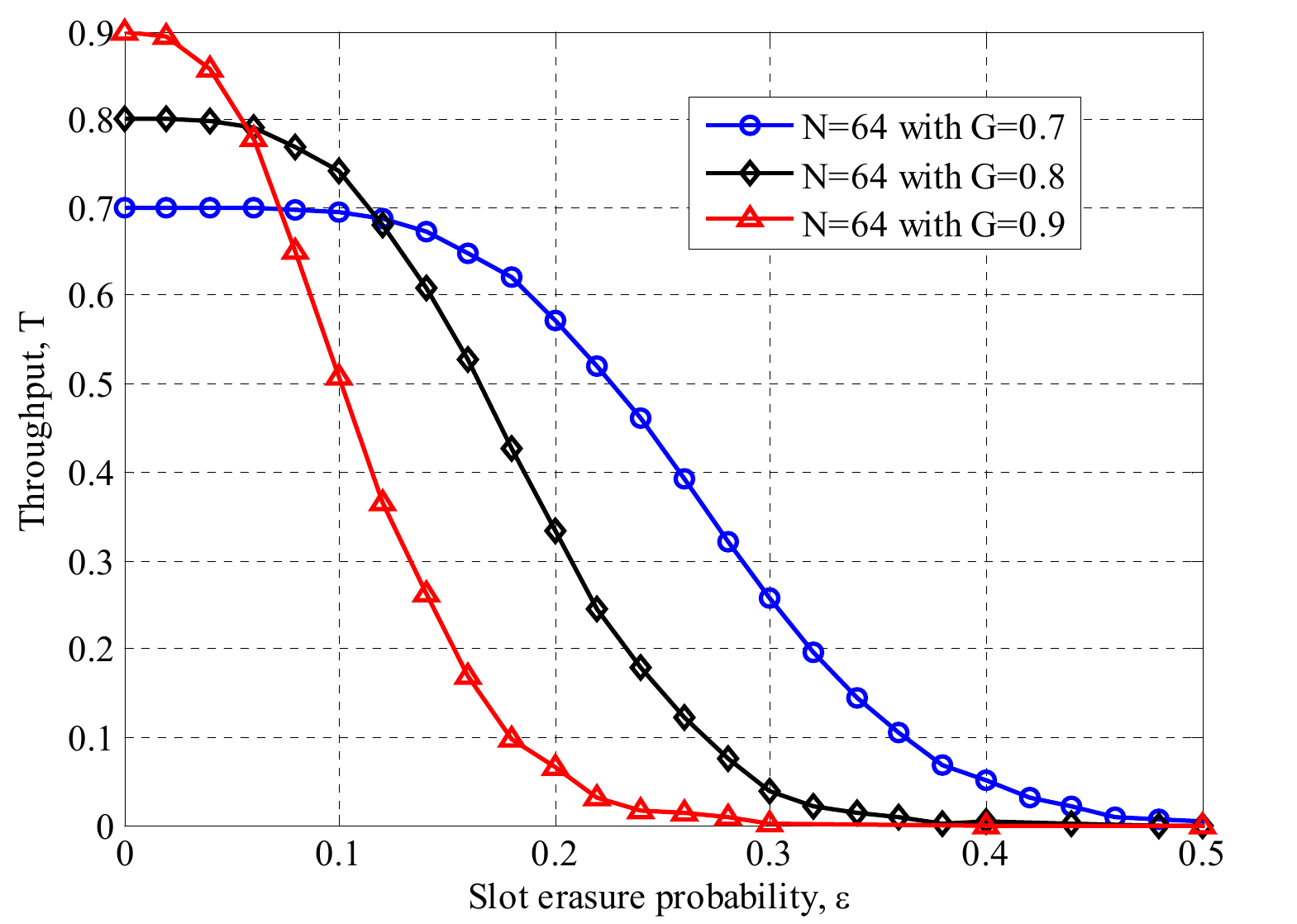}}
\caption{Given the pSC decoding and $N=64$, the throughput versus different SEPs $\epsilon$ with the traffic load $G=0.7$, $0.8$ and $0.9$.}
\label{Throughout_PSA_N_64_vs_e_with_G_0_7_0_8_0_9}
\end{figure}

In the finite $N$ case, with the fixed traffic load, the throughput of the PSA scheme over SECs is affected by the SEP values. It can be seen from Fig.\ref{Throughout_PSA_N_64_vs_e_with_G_0_7_0_8_0_9}, with the SEP value changes from $0$ to $0.1$, the throughput is stable for the fixed traffic load $G=0.7$. However, the throughput is almost halved with the fixed traffic load $G=0.9$. In other words, the higher with respect to the traffic load, the more throughput sensible with the variation of slot erasure probability.

\begin{figure}[htbp]
\centerline{\includegraphics[scale=0.5]{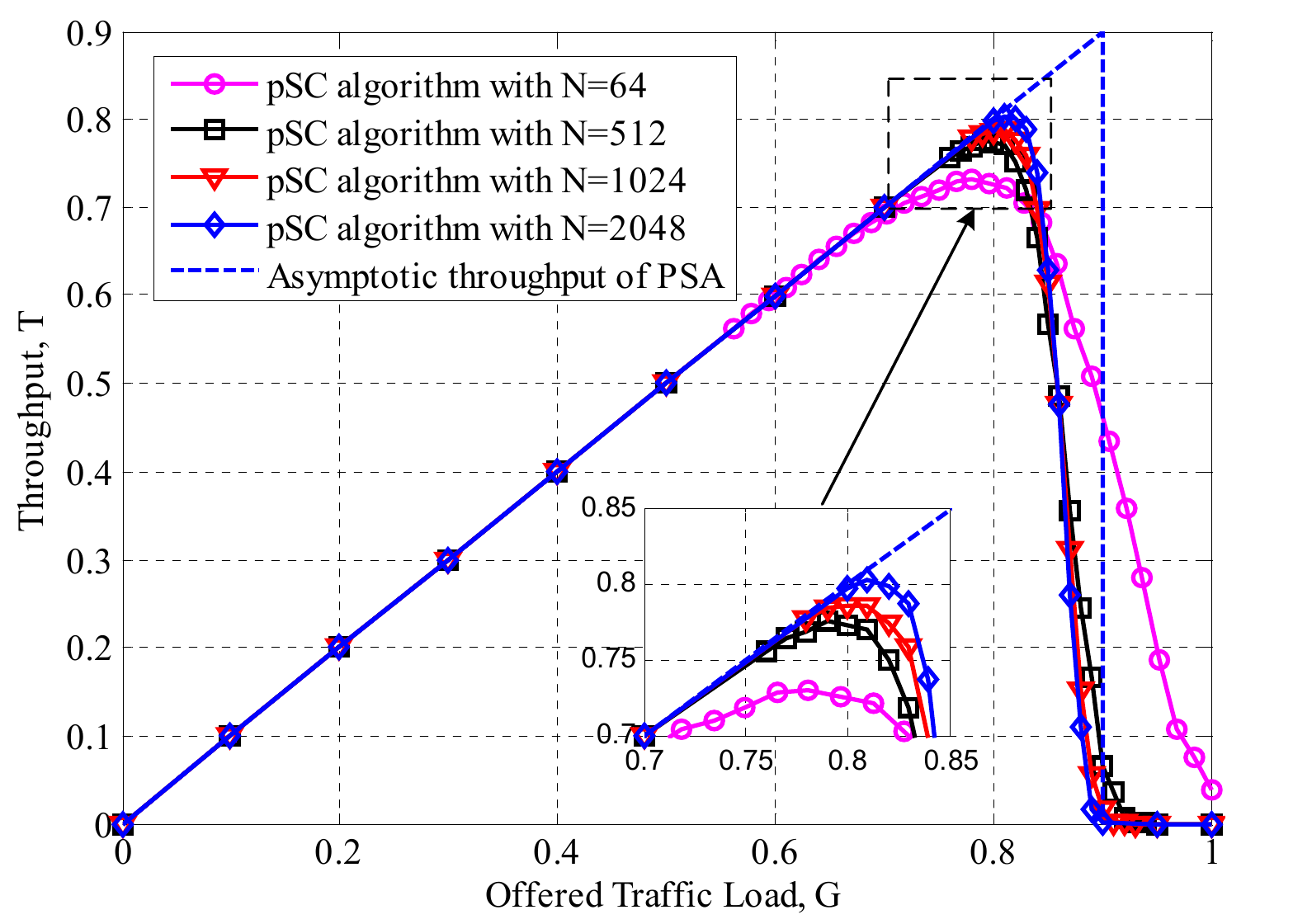}}
\caption{Under the SPA-v algorithm with a SEP $\epsilon=0.1$, the throughput versus traffic load for the different $N=64,512,1024,2048$ using the pSC decoding algorithm.}
\label{Throughtout_PSA_N_64_512_1024_2048_e_0_1}
\end{figure}

Given the identical pSC decoding algorithm and the SEP value, it can be seen from Fig. \ref{Throughtout_PSA_N_64_512_1024_2048_e_0_1} that the maximum throughput is $T^{*}=0.73$ for $N=64$, $T^{*}=0.77$ for $N=512$, $T^{*}=0.79$ for $N=1024$ and $T^{*}=0.80$ for $N=2048$. That is, the throughput of the PSA scheme can be improved with the increasing of $N$. This observation can be explained by the polarization effect becomes more significant with the more slots within a slot-frame of the PSA scheme.

\begin{figure}[htbp]
\centerline{\includegraphics[scale=0.5]{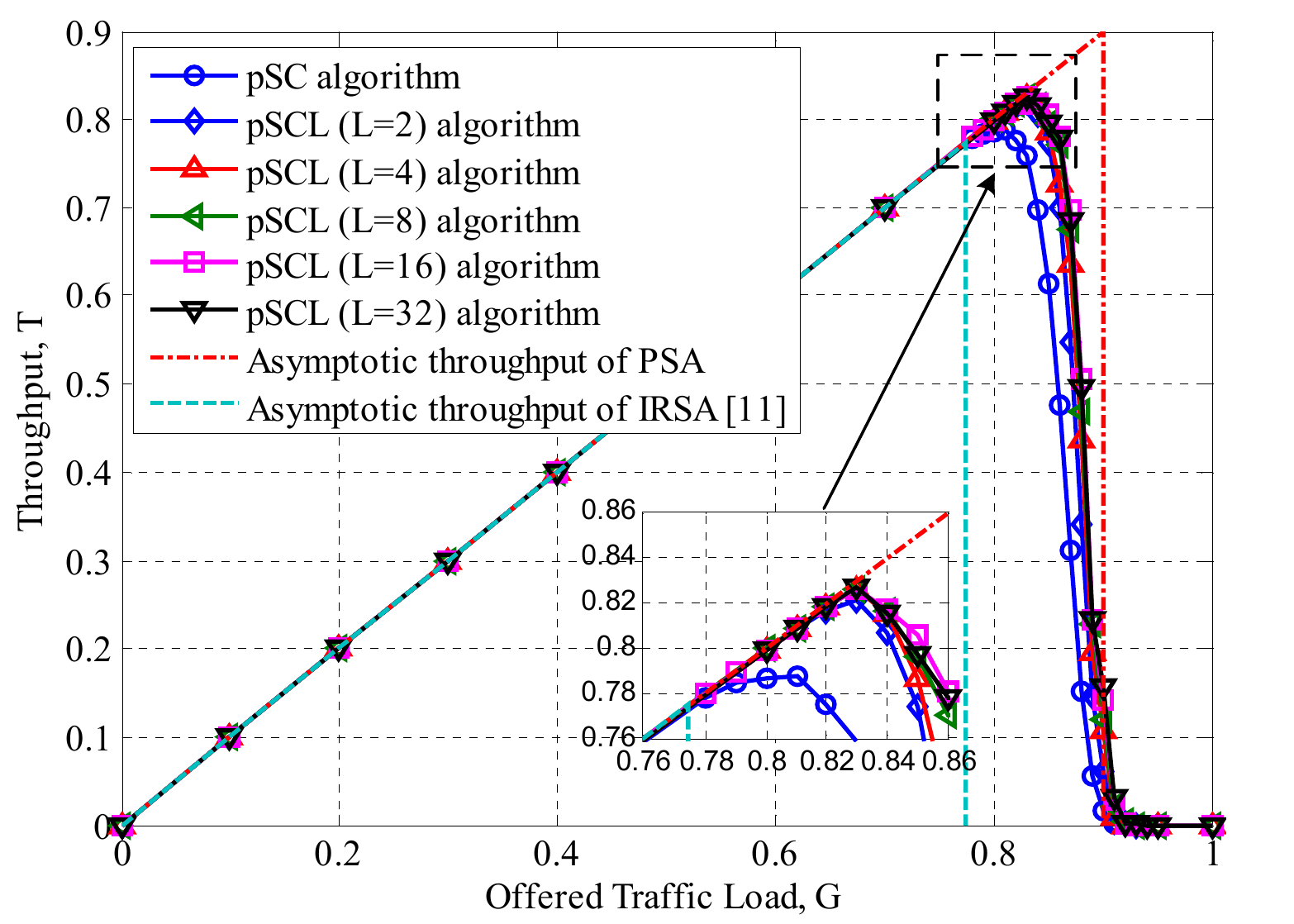}}
\caption{Under the SPA-v algorithm with SEP $\epsilon=0.1$ and $N=1024$, the throughput versus traffic load for the PSA schemes using the pSC/SCL decoding algorithm.}
\label{Throughtout_PSA_N_1024_SCL_L_1_2_4_8_16_32}
\end{figure}

It can be seen from Fig. \ref{Throughtout_PSA_N_1024_SCL_L_1_2_4_8_16_32} that the proposed PSA scheme can achieve an improved throughput with the pSC/SCL decoding algorithm over the traditional IRSA scheme. Given the SEP $\epsilon=0.1$ and $N=1024$, the maximum throughput $T^{*}=0.79$ of the PSA scheme using the pSC decoding exceeds the the asymptotic threshold of the traditional IRSA scheme \cite{CSAErasureChannels:Sun}. Furthermore, the maximum throughput of the PSA scheme using the pSCL decoding can be further improved by increasing $L$. Compared to the traditional IRSA scheme \cite{CSAErasureChannels:Sun}, the asymptotic throughput in the proposed PSA scheme is increased about $0.11$ packets/slot at SEP $\epsilon=0.1$.

How to eliminate the gap between the asymptotic throughput and the actual throughput under the finite $N$ slots case is an interesting issue, other methods should be sought no more than only rely on increasing the list $L$ of pSCL decoding. Just like the CRC-aided SCL decoding of bit-oriented polar codes \cite{CRC_aided_Kai:2012}, the gap may be narrowed by utilizing the prior information about the integrity check of each packet.

\section{Conclusions}\label{section_conclusions}
In this paper, we proposed the PSA schemes over slot erasure channels by using the polar coding to construct identical SP sets in each active user and the BS. Relative to the traditional repetition slotted ALOHA scheme, handling pointers process is avoided in the PSA schemes because of using the identical SP sets. We provided a theoretical analysis framework of the PSA schemes. Based on the packet-oriented operation for the overlap packets when they conflict in a slot, we proved that this operation guarantees the packet-based polarization transform maintains the polarization phenomenon regardless of the length of bits within the packet. Guided by the packet-based polarization, the SPA-v and the SPA-f algorithm for the SP assignment were developed. Finally, the pSC and the pSCL decoding algorithms were introduced. For the case of finite $N$, an upper bound and a lower bound for the PSA schemes using the pSC decoding were investigated. And more, for the case of infinite $N$, the asymptotic throughput of the PSA schemes was also analyzed. Furthermore, the simulation results were given to verify that the proposed PSA scheme can achieve an improved throughput with the pSC/SCL decoding algorithm over the traditional IRSA scheme. How to approach to the asymptotic throughput of PSA in the case of finite $N$ is an interesting issue. The prior information of integrity check of each user's packet can be utilized which will be investigated for the future work of the coded PSA schemes.

\section*{Appendix}

\subsection{Proof of Lemma 2} \label{subsection_proof_of_lemma_2}
\begin{proof}
 By using Eq. (\ref{eq_def_packet_oriented_operation}), we obtain that
\begin{align*}
       & S_{2}(Y_1,Y_2|U_1,U_2) \\
          = & S(Y_1|U_1 \boxplus U_2)S(Y_2|U_2) \\
          = & S\big{(}(y_1)_1^r|(u_1)_1^r \boxplus (u_2)_1^r\big{)} S\big{(}(y_2)_1^r|(u_2)_1^r\big{)} \\
          = & S\big{(}(y_1)_1^r|u_{1,1}\oplus u_{2,1},...,u_{1,r}\oplus u_{2,r}\big{)} S\big{(}(y_2)_1^r|(u_2)_1^r\big{)} \\
          = & \big{(}W(y_{1,1}|u_{1,1}\oplus u_{2,1})\times...\times W(y_{1,r}|u_{1,r}\oplus u_{2,r})\big{)} \\
            & \big{(} W(y_{2,1}|u_{2,1})\times...\times W(y_{2,r}|u_{2,r}) \big{)} \\
          = & \big{(}W(y_{1,1}|u_{1,1}\oplus u_{2,1})W(y_{2,1}|u_{2,1})\big{)} \times... \\
            & \times \big{(} W(y_{1,r}|u_{1,r}\oplus u_{2,r})W(y_{2,r}|u_{2,r}) \big{)} \\
          = & W_2(y_{1,1},y_{2,1}|u_{1,1},u_{2,1}) \times...\times W_2(y_{1,r},y_{2,r}|u_{1,r},u_{2,r})\\
          = & \prod_{i=1}^{r}{W_2(y_{1,i},y_{2,i}|u_{1,i},u_{2,i})}.
\end{align*}
And then, Eq. (\ref{Eq_S2_Equivalent}) holds. Furthermore, substituting Eq. (\ref{Eq_S2_Equivalent}) into Eq. (\ref{eq_def_S+}), we obtain that
  \begin{align*}
  & S^{+}(Y_1,Y_2,U_1|U_2) \\
  = &  {\frac{1}{2^r} S_2(Y_1,Y_2|U_1,U_2)} \\
  = &  {\frac{1}{2^r} \prod_{i=1}^{r}{W_2(y_{1,i},y_{2,i}|u_{1,i},u_{2,i})} }\\
  = &  { \prod_{i=1}^{r}\frac{1}{2}{W_2(y_{1,i},y_{2,i}|u_{1,i},u_{2,i})} }\\
  = &  \prod_{i=1}^{r} W^{+}(y_{1,i},y_{2,i},u_{1,i}|u_{2,i}).
 \end{align*}
Obviously, Eq.(\ref{Eq_S+_Equivalent}) holds.

With the same approach, substituting Eq. (\ref{Eq_S2_Equivalent}) into Eq. (\ref{eq_def_S-}), it can be obtained that
\begin{align*}
  &  S^{-}(Y_1,Y_2|U_1) \\
  = & \sum_{U_2 \in \mathcal{X}^r } {\frac{1}{2^r} S_2(Y_1,Y_2|U_1,U_2)} \\
  = & \frac{1}{2^r} \sum_{U_2 \in \mathcal{X}^r } {\prod_{i=1}^{r} W_{2}(y_{1,i},y_{2,i}|u_{1,i},u_{2,i})} \\
  = & \frac{1}{2^r} \sum_{\{u_{2,1} \in \mathcal{X}\}\cup ...\cup \{u_{2,r}\in \mathcal{X}\}} {\prod_{i=1}^{r} W_{2}(y_{1,i},y_{2,i}|u_{1,i},u_{2,i})} \\
  = & \frac{1}{2^r} \bigg{(} \sum_{u_{2,1} \in \mathcal{X} } \prod_{i=1}^{r} W_{2}(y_{1,i},y_{2,i}|u_{1,i},u_{2,i})\times ... \\
    & \times \sum_{u_{2,r} \in \mathcal{X} } {\prod_{i=1}^{r} W_{2}(y_{1,i},y_{2,i}|u_{1,i},u_{2,i})} \bigg{)} \\
  = & \frac{1}{2^r} \bigg{(} \sum_{u_{2,1} \in \mathcal{X} } W_{2}(y_{1,1},y_{2,1}|u_{1,1},u_{2,1})\times ... \\
    & \times \sum_{u_{2,r} \in \mathcal{X} } { W_{2}(y_{1,r},y_{2,r}|u_{1,r},u_{2,r})} \bigg{)} \\
  = & \bigg{(} \frac{1}{2} \sum_{u_{2,1} \in \mathcal{X} } W_{2}(y_{1,1},y_{2,1}|u_{1,1},u_{2,1})\bigg{)} \times ... \\
    & \times \bigg{(}\frac{1}{2} \sum_{u_{2r} \in \mathcal{X} } { W_{2}(y_{1,r},y_{2,r}|u_{1,r},u_{2,r})} \bigg{)} \\
  = & W^{-}(y_{1,1},y_{2,1}|u_{1,1}) \times...\times W^{-}(y_{1,r},y_{2,r}|u_{1,r})\\
  = & \prod_{i=1}^{r}{W^{-}(y_{1,i},y_{2,i}|u_{1,i})}.
\end{align*}
So Eq. (\ref{Eq_S-_Equivalent}) holds. Therefore, the proof of \emph{Lemma} \ref{lemma_SEC_product_compound_channel} completes.

\end{proof}

\subsection{Proof of Lemma 3}
\begin{proof}
When $N=1$, the synthetic channel $S_N^{(i)}$ degrades to $W$. Obviously, from \emph{Lemma} $1$ and \emph{Remark} $1$, Eq. (\ref{eq_capcaity_combined_polarized_channel}) holds. When $N>1$, for any $1\leqslant i\leqslant N$, from \emph{Lemma} $2$, \emph{Remark} $2$ and \cite{parallel channels:1976}, using Eq. (\ref{Eq_relation_S_and_W}), then

\begin{equation*}
 \begin{aligned}
I\Big{(}S_N^{(j)}\Big{)} = &I\Big{(}\prod_{i=1}^{r} {W^{(j)}_{N}(y_{1,i},...,y_{N,i},u_{1,i},...u_{j-1,i}|u_{j,i})}\Big{)} \\
                         = &I\Big{(}{W^{(j)}_{N}(y_{1,1},...,y_{N,1},u_{1,1},...u_{j-1,1}|u_{j,1})} \times ...\\
                           &\times{W^{(j)}_{N}(y_{1,r},...,y_{N,r},u_{1,r},...u_{j-1,r}|u_{j,r})} \Big{)} \\
                         = &I\Big{(}{W^{(j)}_{N}(y_{1,1},...,y_{N,1},u_{1,1},...u_{j-1,1}|u_{j,1})}\Big{)} +...\\
                           &+ I\Big{(}{W^{(j)}_{N}(y_{1,r},...,y_{N,r},u_{1,r},...u_{j-1,r}|u_{j,r})} \Big{)} \\
                         = &rI\Big{(}W_N^{(j)}\Big{)}.
 \end{aligned}
\end{equation*}
the proof of \emph{Lemma} \ref{lemma_capcaity_combined_polarized_channel} completes.
\end{proof}

\subsection{Proof of Theorem $1$}
\begin{proof}
For any $1\leqslant j \leqslant N/2$, $N=2^n$, the capacity and the Bhattacharyya parameter of synthetic channels for the BEC $W$ with a erasure probability $\epsilon$ are computed using the recursive relations \cite{polarcode_Arikan} as
\begin{equation*}
    \left\{
   \begin{aligned}
   I\Big{(}W_N^{(2j-1)}\Big{)} & = I\Big{(}W_{N/2}^{(j)}\Big{)}^2 \\
     I\Big{(}W_N^{(2j)}\Big{)} & = 2I\Big{(}W_{N/2}^{(j)}\Big{)} - I\Big{(}W_{N/2}^{(j)}\Big{)}^2
   \end{aligned} \right.\label{eq_capcaity_combined_bit_polarized_channel}
\end{equation*}

 \begin{equation*}
    \left\{
   \begin{aligned}
   Z\Big{(}W_N^{(2j-1)}\Big{)} & =  2Z\Big{(}W_{N/2}^{(j)}\Big{)} - Z\Big{(}W_{N/2}^{(j)}\Big{)}^2\\
     Z\Big{(}W_N^{(2j)}\Big{)} & =  Z\Big{(}W_{N/2}^{(j)}\Big{)}^2
   \end{aligned} \right.\label{eq_capcaity_combined_bit_polarized_channel}
\end{equation*}
with the initial values $I_0 = I\Big{(}W_1^{(1)}\Big{)} = 1-\epsilon$ and $Z_0 = Z\Big{(}W_1^{(1)}\Big{)} = \epsilon$.

From \emph{Lemma} \ref{lemma_capcaity_combined_polarized_channel}, the capacity of synthetic channel of SECs can be computed as
 \begin{equation*}
   \begin{aligned}
 I\Big{(}S_N^{(2j-1)}\Big{)} = &rI\Big{(}W_N^{(2j-1)}\Big{)} \\
                 = &rI\Big{(}W_{N/2}^{(j)}\Big{)}^2 \\
                 = &\frac{1}{r}\Big{(}rI\Big{(}W_{N/2}^{(j)}\Big{)}\Big{)}^2 \\
                 = &\frac{1}{r}I\Big{(}S_{N/2}^{(j)}\Big{)}^2
   \end{aligned}
\end{equation*}
and
\begin{equation*}
   \begin{aligned}
 I\Big{(}S_N^{(2j)}\Big{)} = &rI\Big{(}W_N^{(2j)}\Big{)} \\
                 = &r\bigg{(} 2I\Big{(}W_{N/2}^{(j)}\Big{)} - I\Big{(}W_{N/2}^{(j)}\Big{)}^2 \bigg{)}\\
                 = &2rI\Big{(}W_{N/2}^{(j)}\Big{)} - rI\Big{(}W_{N/2}^{(j)}\Big{)}^2 \\
                 = &2\bigg{(}rI\Big{(}W_{N/2}^{(j)}\Big{)}\bigg{)} -\frac{1}{r}\bigg{(} rI\Big{(}W_{N/2}^{(j)}\Big{)} \bigg{)}^2 \\
                 = &2I\Big{(}S_{N/2}^{(j)}\Big{)} -\frac{1}{r} I\Big{(}S_N^{(j)}\Big{)}^2.
   \end{aligned}
\end{equation*}
And then, Eq. (\ref{SEC_info_capacity}) holds. Using the same approach, the Bhattacharyya parameters for SEC synthetic channels are computed as
\begin{equation*}
   \begin{aligned}
 Z\Big{(}S_N^{(2j-1)}\Big{)} = &rZ\Big{(}W_N^{(2j-1)}\Big{)} \\
                             = &r\bigg{(} 2Z\Big{(}W_{N/2}^{(j)}\Big{)} - Z\Big{(}W_{N/2}^{(j)}\Big{)}^2 \bigg{)}\\
                             = &2rZ\Big{(}W_{N/2}^{(j)}\Big{)} - rZ\Big{(}W_{N/2}^{(j)}\Big{)}^2 \\
                             = &2\bigg{(}rZ\Big{(}W_{N/2}^{(j)}\Big{)}\bigg{)} -\frac{1}{r}\bigg{(} rZ\Big{(}W_{N/2}^{(j)}\Big{)} \bigg{)}^2 \\
                             = &2Z\Big{(}S_{N/2}^{(j)}\Big{)} -\frac{1}{r} Z\Big{(}S_N^{(j)}\Big{)}^2.
   \end{aligned}
\end{equation*}
and
\begin{equation*}
   \begin{aligned}
 Z\Big{(}S_N^{(2j)}\Big{)} = &rZ\Big{(}W_N^{(2j)}\Big{)} \\
                           = &rZ\Big{(}W_{N/2}^{(j)}\Big{)}^2 \\
                           = &\frac{1}{r}\Big{(}rZ\Big{(}W_{N/2}^{(j)}\Big{)}\Big{)}^2 \\
                           = &\frac{1}{r}Z\Big{(}S_{N/2}^{(j)}\Big{)}^2.
   \end{aligned}
\end{equation*}
 So Eq. (\ref{SEC_Z_parameter}) holds. The proof of \emph{Theorem} \ref{theorem_SEC_capacity_and_Z_parameter} completes.
\end{proof}





\ifCLASSOPTIONcaptionsoff
  \newpage
\fi

\end{document}